\newcommand{\ve}[1]{{\mbox{\boldmath ${#1}$}}}
\newcommand\independent{\protect\mathpalette{\protect\independenT}{\perp}}
\def\independenT#1#2{\mathrel{\rlap{$#1#2$}\mkern2mu{#1#2}}}
\newtheorem{lemma}{Lemma}[section]
\newcommand\tsup[2][2]{
 \def\useanchorwidth{T}%
  \ifnum#1>1%
    \stackon[-.5pt]{\tsup[\numexpr#1-1\relax]{#2}}{\scriptscriptstyle\sim}%
  \else%
    \stackon[.5pt]{#2}{\scriptscriptstyle\sim}%
  \fi%
}
\newcommand\myeq{\mathrel{\stackrel{\makebox[0pt]{\mbox{\normalfont\tiny $H_0$}}}{\sim}}}
\title{Supplementary Materials for ``T2-DAG: a powerful test for differentially expressed gene pathways via graph-informed structural equation modeling''}
\author{Jin Jin, Yue Wang}
\date{} 
\begin{document}

\maketitle
\doublespacing
\pagestyle{plain} 

\section{Proof of Theorems}\label{supp1}
In this proof, we consider the high-dimensional regime where $(p, d, p_0) \rightarrow \infty$ as $(n_1, n_2) \rightarrow \infty$. 
We begin with some additional notations. For a $p$-dimensional random vector ${\bf v} = \left( v _ { 1 } , \dots , v _ { p } \right) ^ { \intercal }$, 
we define the following vector norms,
\[
\| {\bf v} \| _ { q } : = \left( \sum _ { j = 1 } ^ { p } \left| v _ { j } \right| ^ { q } \right) ^ { 1 / q } \mbox{ for } 0 < q < \infty , \quad \| {\bf v} \| _ { \infty } : = \max _ { 1 \leq j \leq p } \left| v _ { j } \right|.\]
For a $p \times p$ matrix $M = (m_{jk})_{j,k = 1, \ldots, p}$, we define
\[\| M \| _ { q } : = \max _ { \| v \| _ { q } = 1 } \| M v \| _ { q } , \quad \| M \| _ { \max } : = \max _ { (j, k) } \left| m _ { j k } \right| ,\quad \| M \| _ { \mathrm { F } } : = \left( \sum _ { j , k } \left| m _ { j k } \right| ^ { 2 } \right) ^ { 1 / 2 },\]
for $0 < q < \infty$. 
We use $I_p$ to denote a $p\times p$ identity matrix. For any index set $S \subseteq \{1, \ldots, p\}$, let ${\bf v}_{S}$ denote the subvector of $\bf v$ with entries indexed by $S$. Similarly, $M_{S, S}$ denotes the submatrix of $M$ with rows and columns indexed by $S$.

We first cite the following result (Theorem 6.1, \citealp{wainwright2019high}).
\begin{lemma} \label{lemma:wainwright}
Let ${\bf x}_1, \ldots, {\bf x}_n$ be $p \times 1$ random vectors drawn {\it i.i.d} from a multivariate normal distribution with mean $\bf 0$ and covariance $\Sigma$, 
and ${X} = [{\bf x}_1 ~ \cdots ~ {\bf x}_n]^\intercal$. 
 Let $\sigma_{\max}({X}), \sigma_{\min}({ X})$, $\gamma_{\max}(\Sigma)$ and $\gamma_{\min}(\Sigma)$, respectively, denote the largest and smallest singular value of $X$ and the largest and smallest eigenvalue of $\Sigma$.  For all $\delta > 0$, we then have
\begin{equation}\label{eq:lemma:cite:1}
\mathbb{P}\left[\frac{\sigma_{\max }(X)}{\sqrt{n}} \geq \sqrt{\gamma_{\max }(\Sigma)}(1+\delta)+\sqrt{\frac{\operatorname{tr}(\Sigma)}{n}}\right] \leq e^{-n \delta^{2} / 2}, 
\end{equation}
and
\begin{equation}
\mathbb{P}\left[\frac{\sigma_{\min }(X)}{\sqrt{n}} \leq \sqrt{\gamma_{\min }(\Sigma)}(1-\delta)-\sqrt{\frac{\operatorname{tr}(\Sigma)}{n}}\right] \leq e^{-n \delta^{2} / 2}.
\end{equation}
\end{lemma}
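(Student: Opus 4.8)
The plan is to obtain both tail bounds from two classical ingredients: concentration of Lipschitz functions of Gaussian vectors, which controls the fluctuation of each extreme singular value about its mean, and Gaussian comparison inequalities, which control the means themselves. First I would reduce to the isotropic case by writing $X = W\Sigma^{1/2}$, where $W = [\mathbf{w}_1 \ \cdots \ \mathbf{w}_n]^\intercal$ has i.i.d.\ $N(0,1)$ entries and $\mathbf{x}_i = \Sigma^{1/2}\mathbf{w}_i$. This exhibits $\sigma_{\max}(X)$ and $\sigma_{\min}(X)$ as functions of the $np$ independent standard normal entries of $W$ and makes $\operatorname{tr}(\Sigma) = \mathbb{E}\|\mathbf{x}_i\|_2^2$ appear naturally.

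For the concentration step, I would note that by Weyl's perturbation inequality for singular values together with submultiplicativity of the operator norm, $|\sigma_{\max}(W_1\Sigma^{1/2}) - \sigma_{\max}(W_2\Sigma^{1/2})| \le \|(W_1 - W_2)\Sigma^{1/2}\|_2 \le \sqrt{\gamma_{\max}(\Sigma)}\,\|W_1 - W_2\|_{\mathrm F}$, and identically for $\sigma_{\min}$. Hence both maps are Lipschitz in $W$ with constant $\sqrt{\gamma_{\max}(\Sigma)}$, so the Gaussian concentration inequality \citep{wainwright2019high} gives $\mathbb{P}[\sigma_{\max}(X) \ge \mathbb{E}\,\sigma_{\max}(X) + t] \le \exp\{-t^2/(2\gamma_{\max}(\Sigma))\}$ and the analogous lower tail for $\sigma_{\min}(X)$. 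Choosing $t = \delta\sqrt{n\,\gamma_{\max}(\Sigma)}$ converts the exponent into $-n\delta^2/2$ after dividing through by $\sqrt n$, supplying the $(1\pm\delta)$ deviation term at scale $\sqrt{\gamma_{\max}(\Sigma)}$ and the factor $e^{-n\delta^2/2}$.

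It then remains to bound the two expectations. Writing $\sigma_{\max}(X) = \sup_{\|u\|_2 = \|v\|_2 = 1} u^\intercal W\Sigma^{1/2}v$, I would compare the centered Gaussian process $Y_{u,v} = u^\intercal W\Sigma^{1/2}v$ with $Z_{u,v} = \sqrt{\gamma_{\max}(\Sigma)}\,\langle g, u\rangle + \langle h, \Sigma^{1/2}v\rangle$, where $g \sim N(0, I_n)$ and $h \sim N(0, I_p)$ are independent. A direct covariance computation gives
\[
\mathbb{E}(Z_{u,v} - Z_{u',v'})^2 - \mathbb{E}(Y_{u,v} - Y_{u',v'})^2 = 2\,(1 - u^\intercal u')\,\bigl\{\gamma_{\max}(\Sigma) - v^\intercal\Sigma v'\bigr\} \ge 0,
\]
since $u^\intercal u' \le 1$ and $v^\intercal\Sigma v' \le \gamma_{\max}(\Sigma)$ on the unit spheres. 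The Sudakov--Fernique inequality then yields $\mathbb{E}\,\sigma_{\max}(X) \le \mathbb{E}\sup_{u,v} Z_{u,v} = \sqrt{\gamma_{\max}(\Sigma)}\,\mathbb{E}\|g\|_2 + \mathbb{E}\|\Sigma^{1/2}h\|_2 \le \sqrt{n\,\gamma_{\max}(\Sigma)} + \sqrt{\operatorname{tr}(\Sigma)}$, using $\mathbb{E}\|g\|_2 \le \sqrt n$ and $\mathbb{E}\|\Sigma^{1/2}h\|_2 \le (\mathbb{E}\|\Sigma^{1/2}h\|_2^2)^{1/2} = \sqrt{\operatorname{tr}(\Sigma)}$; dividing by $\sqrt n$ and combining with the concentration step gives the first inequality. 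For $\sigma_{\min}(X) = \min_{\|v\|_2=1}\max_{\|u\|_2=1} u^\intercal W\Sigma^{1/2}v$ (in the regime where this equals the smallest nonzero singular value) I would instead invoke Gordon's min--max comparison inequality, which lower-bounds the expected min--max by that of the auxiliary process $\sqrt{v^\intercal\Sigma v}\,\langle g,u\rangle + \langle \Sigma^{1/2}h, v\rangle$; maximizing over $u$ and minimizing over $v$ evaluates the auxiliary optimum to $\sqrt{\gamma_{\min}(\Sigma)}\,\|g\|_2 - \|\Sigma^{1/2}h\|_2$, so that $\mathbb{E}\,\sigma_{\min}(X) \ge \sqrt{\gamma_{\min}(\Sigma)}\,\mathbb{E}\|g\|_2 - \sqrt{\operatorname{tr}(\Sigma)} = \sqrt{n\,\gamma_{\min}(\Sigma)}\,(1-o(1)) - \sqrt{\operatorname{tr}(\Sigma)}$; the lower-tail concentration bound then yields the second inequality.

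The main obstacle is the $\sigma_{\min}$ branch. Unlike $\sigma_{\max}$, which is a pure supremum and therefore handled by the one-sided Sudakov--Fernique inequality, the smallest singular value has a genuine min--max structure, so I must use Gordon's inequality and verify that the covariance comparisons point in the directions compatible with the inner $\max$ over $u$ and the outer $\min$ over $v$ (in particular, the auxiliary process must carry the data-dependent coefficient $\sqrt{v^\intercal\Sigma v}$ rather than the constant $\sqrt{\gamma_{\max}(\Sigma)}$ used for the supremum, which is precisely what produces $\sqrt{\gamma_{\min}(\Sigma)}$ after the inner minimization). Keeping these directions straight and correctly evaluating the auxiliary min--max is the delicate part; the remaining reconciliations, namely $\mathbb{E}\|g\|_2 = \sqrt n\,(1-o(1))$ in the high-dimensional regime and the passage from the expectation bounds to the stated $(1\pm\delta)$ form, are routine.
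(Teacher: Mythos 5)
First, a point of context: the paper does not prove this lemma at all --- it is quoted verbatim as Theorem 6.1 of Wainwright (2019) --- so your attempt has to be judged against the statement itself rather than against an internal argument. Your treatment of the \emph{maximum} singular value is correct and is essentially the standard proof: the Lipschitz bound $|\sigma_{\max}(W_1\Sigma^{1/2})-\sigma_{\max}(W_2\Sigma^{1/2})|\le \sqrt{\gamma_{\max}(\Sigma)}\,\|W_1-W_2\|_{\mathrm F}$, Gaussian concentration, and the Sudakov--Fernique comparison (your covariance computation is right, and $\mathbb{E}\|g\|_2\le\sqrt n$ goes in the harmless direction) give exactly the first inequality, because there the $(1+\delta)$ term sits at scale $\sqrt{\gamma_{\max}(\Sigma)}$ --- the same scale at which the concentration operates.

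The genuine gap is in the $\sigma_{\min}$ branch. The map $W\mapsto\sigma_{\min}(W\Sigma^{1/2})$ is Lipschitz only with constant $\sqrt{\gamma_{\max}(\Sigma)}$, and this cannot be improved to $\sqrt{\gamma_{\min}(\Sigma)}$ (a perturbation of $W$ along the top eigendirection of $\Sigma$ can move $\sigma_{\min}$ by $\sqrt{\gamma_{\max}(\Sigma)}$ times its Frobenius size). Hence the lower-tail concentration you invoke yields deviations $\delta\sqrt{n\,\gamma_{\max}(\Sigma)}$ at probability level $e^{-n\delta^2/2}$, and combined with Gordon's expectation bound your argument proves only
\[
\mathbb{P}\left[\frac{\sigma_{\min}(X)}{\sqrt n}\le \sqrt{\gamma_{\min}(\Sigma)}\,(1-o(1))-\delta\sqrt{\gamma_{\max}(\Sigma)}-\sqrt{\frac{\operatorname{tr}(\Sigma)}{n}}\right]\le e^{-n\delta^2/2},
\]
which is strictly weaker than the claimed threshold $\sqrt{\gamma_{\min}(\Sigma)}(1-\delta)-\sqrt{\operatorname{tr}(\Sigma)/n}$ whenever $\gamma_{\min}(\Sigma)<\gamma_{\max}(\Sigma)$: the event in the lemma is strictly larger than the one you control. (The $(1-o(1))$ from $\mathbb{E}\|g\|_2<\sqrt n$ is a second, non-asymptotic leak, since the lemma is claimed for every $n$ and every $\delta>0$.) The fix is to abandon direct concentration of $\sigma_{\min}(X)$ and reduce to the isotropic case: since $\sigma_{\min}(W\Sigma^{1/2})\ge\sigma_{\min}(W)\,\sigma_{\min}(\Sigma^{1/2})=\sqrt{\gamma_{\min}(\Sigma)}\,\sigma_{\min}(W)$, apply the Davidson--Szarek bound $\mathbb{P}[\sigma_{\min}(W)\le\sqrt n(1-\delta)-\sqrt p\,]\le e^{-n\delta^2/2}$ to the standard Gaussian matrix $W$ (there the Lipschitz constant is $1$, and Gordon's inequality gives $\mathbb{E}\sigma_{\min}(W)\ge\mathbb{E}\|g_n\|_2-\mathbb{E}\|g_p\|_2\ge\sqrt n-\sqrt p$ with no slack), then use $p\,\gamma_{\min}(\Sigma)\le\operatorname{tr}(\Sigma)$ to replace $\sqrt{\gamma_{\min}(\Sigma)}\sqrt{p/n}$ by the larger $\sqrt{\operatorname{tr}(\Sigma)/n}$. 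Note this trick is unavailable for the maximum, where $p\,\gamma_{\max}(\Sigma)\ge\operatorname{tr}(\Sigma)$ points the wrong way; that asymmetry is precisely why the two inequalities require different arguments, and why your uniform ``concentrate, then bound the mean'' template breaks on the second one.
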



Recall that $S_j = \{i: a_{ij} = 1\}$, where $a_{ij}$ is the $(i,j)$-th entry of the adjacency matrix $A$, 
and  
\[\widehat{\Sigma} = (n_1 + n_2)^{-1}\left( \sum_{k=1}^{n_1} ({\bf x}^{(1)}_k - \overline{\bf x}^{(1)})^{\otimes 2} + \sum_{k=1}^{n_2} ({\bf x}^{(2)}_k - \overline{\bf x}^{(2)})^{\otimes 2} \right),\]
where for any vector $\bf a$, ${\bf a}^{\otimes 2} = {\bf aa}^\intercal$.


The next result bounds the spectral norm distance between $\widehat{\Sigma}_{S_j, S_j}$  and $\Sigma_{S_j, S_j}$ for all $j \in \mathcal{J}$, where $\mathcal{J} = \{j: |S_j| > 0\}$. 
\begin{lemma}\label{covariance}
As $n_1, n_2 \rightarrow \infty$, we have 
\begin{align}\label{lemma1:eq2}
\max_{j \in \mathcal{J}}\frac{\|\Sigma_{S_j, S_j}^{-1/2}\widehat{\Sigma}_{S_j, S_j}\Sigma_{S_j, S_j}^{-1/2} - I_{|S_j|}\|_2}{|S_j|} =  O_p\left( \frac{ \log p_0}{n_1 + n_2}\right).
\end{align}
Furthermore, if $d\log p_0 = o(n_1 + n_2)$, then
\begin{align}\label{lemma1:eq3}
\max_{j \in \mathcal{J}}\frac{\|\Sigma_{S_j, S_j}^{1/2}\{\widehat{\Sigma}_{S_j, S_j}\}^{-1}\Sigma_{S_j, S_j}^{1/2} - I_{|S_j|}\|_2}{|S_j|} =  O_p\left( \frac{ \log p_0}{n_1 + n_2}\right).
\end{align}
\end{lemma}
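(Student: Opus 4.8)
The plan is to reduce each block $S_j$ to a standardized Wishart matrix, apply Lemma~\ref{lemma:wainwright} blockwise, and control the maximum over $\mathcal{J}$ by a union bound. Fix $j\in\mathcal{J}$ and write $s_j=|S_j|$ and $n=n_1+n_2$. First I would \emph{whiten}: setting $\mathbf{z}^{(g)}_k=\Sigma_{S_j,S_j}^{-1/2}(\mathbf{x}^{(g)}_k)_{S_j}$, which within group $g$ are i.i.d.\ with covariance $I_{s_j}$ because both groups share $\Sigma$, gives
\[
M_j:=\Sigma_{S_j,S_j}^{-1/2}\widehat{\Sigma}_{S_j,S_j}\Sigma_{S_j,S_j}^{-1/2}=\frac{1}{n}\Big(\sum_{k=1}^{n_1}(\mathbf{z}^{(1)}_k-\overline{\mathbf{z}}^{(1)})^{\otimes 2}+\sum_{k=1}^{n_2}(\mathbf{z}^{(2)}_k-\overline{\mathbf{z}}^{(2)})^{\otimes 2}\Big).
\]
By Cochran's theorem the two within-group centered sums are independent standard Wisharts with $n_1-1$ and $n_2-1$ degrees of freedom, so their sum equals $\widetilde{Z}^{\intercal}\widetilde{Z}$ for an $(n-2)\times s_j$ matrix $\widetilde{Z}$ with i.i.d.\ $N(\mathbf{0},I_{s_j})$ rows; the loss of two degrees of freedom from the estimated means is minor bookkeeping.

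Next I would invoke Lemma~\ref{lemma:wainwright} for $\widetilde{Z}$ with $\Sigma=I_{s_j}$, so $\gamma_{\max}=\gamma_{\min}=1$ and $\operatorname{tr}(\Sigma)=s_j$. Since $\|M_j-I_{s_j}\|_2=\max\{\lambda_{\max}(M_j)-1,\,1-\lambda_{\min}(M_j)\}$ with $\lambda_{\max}(M_j)=\sigma_{\max}(\widetilde{Z})^2/n$ and $\lambda_{\min}(M_j)=\sigma_{\min}(\widetilde{Z})^2/n$, the two tails of Lemma~\ref{lemma:wainwright} (applied with $n-2$ rows, costing only a $1+o(1)$ factor) give, for any $\delta>0$ and fixed $j$,
\[
\|M_j-I_{s_j}\|_2\ \lesssim\ \delta+\sqrt{s_j/n}\quad\text{with probability at least }1-2e^{-(n-2)\delta^2/2}.
\]
To make this uniform over $\mathcal{J}$ I would take $\delta=\delta_n$ of order $\sqrt{(\log p_0)/n}$ and union-bound over the at most $p_0$ indices in $\mathcal{J}$, so the total failure probability is $\lesssim p_0\,e^{-(n-2)\delta_n^2/2}=o(1)$ once the constant in $\delta_n$ is large enough. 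Dividing by $s_j$ and maximizing over $j$ then yields~\eqref{lemma1:eq2}. I would track the constants carefully at this step, since the raw bound produces the square-root scale $\delta_n+\sqrt{s_j/n}$, and it is the normalization by $|S_j|$ together with this $\delta_n$ that is supposed to drive the stated order $(\log p_0)/n$.

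For the inverse bound~\eqref{lemma1:eq3} I would not redo any concentration but leverage~\eqref{lemma1:eq2} through the identity $M_j^{-1}-I_{s_j}=-M_j^{-1}(M_j-I_{s_j})$, giving $\|M_j^{-1}-I_{s_j}\|_2\le \lambda_{\min}(M_j)^{-1}\|M_j-I_{s_j}\|_2$. The extra hypothesis $d\log p_0=o(n)$ is exactly what forces $\max_{j\in\mathcal{J}}\|M_j-I_{s_j}\|_2=o(1)$ (as $\max_j s_j\le d$), hence $\lambda_{\min}(M_j)\ge 1-o(1)$ uniformly and $\lambda_{\min}(M_j)^{-1}=1+o(1)$, so dividing by $s_j$ transfers the rate from~\eqref{lemma1:eq2} verbatim. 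The main obstacle is the uniformity over $\mathcal{J}$: the per-block Wishart control is routine, but making it simultaneous forces the $\sqrt{(\log p_0)/n}$ inflation of $\delta$, and for the inverse it is essential that this inflation remain $o(1)$ after accounting for the block size $d$, which is precisely the content of $d\log p_0=o(n)$.
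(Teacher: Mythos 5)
Your proposal follows the same skeleton as the paper's proof (whiten each block, reduce to a standard Wishart, apply Lemma~\ref{lemma:wainwright} blockwise, union bound over $\mathcal{J}$); your use of Cochran's theorem to absorb the mean-centering is in fact a cleaner variant of the paper's decomposition, which keeps the uncentered sum $\frac{1}{n}\sum_k {\bf w}_k^{\otimes 2}$ and controls the two rank-one terms $(\overline{\bf x}^{(g)}-\ve\mu_g)^{\otimes 2}$ separately via $\chi^2$ tails. The genuine gap is your final step for \eqref{lemma1:eq2}. With the correct \emph{linear} deviation bound $\|M_j-I_{s_j}\|_2\lesssim\delta+\sqrt{s_j/n}$ and your $j$-independent choice $\delta_n\asymp\sqrt{(\log p_0)/n}$, dividing by $s_j$ and maximizing gives
\[
\max_{j\in\mathcal{J}}\frac{\|M_j-I_{s_j}\|_2}{s_j}\;\lesssim\;\max_{j\in\mathcal{J}}\left(\frac{\delta_n}{s_j}+\frac{1}{\sqrt{s_j n}}\right)\;\le\;\delta_n+\frac{1}{\sqrt{n}}\;\asymp\;\sqrt{\frac{\log p_0}{n}},
\]
which is not $(\log p_0)/n$: since $\mathcal{J}$ can contain blocks with $s_j=O(1)$ (single-parent nodes), and in the relevant regime $\log p_0=o(n)$ the two rates differ by a diverging factor, no tracking of constants closes this. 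Switching to a block-dependent $\delta_j\asymp\sqrt{s_j(\log p_0)/n}$ does not help under the linear bound either, because $(\delta_j+\sqrt{s_j/n})/s_j\asymp(2\sqrt{\log p_0}+1)/\sqrt{s_j n}$, which is again of order $\sqrt{(\log p_0)/n}$ when $s_j=O(1)$.

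For comparison, the paper reaches the stated rate precisely by combining the block-dependent choice $\delta_j=\sqrt{4|S_j|\log p_0/(n_1+n_2)}$ with a bound on the spectral deviation by the \emph{square} $\left(\delta_j+\sqrt{|S_j|/n}\right)^2$, which after division by $|S_j|$ equals $(2\sqrt{\log p_0}+1)^2/n=O\!\left((\log p_0)/n\right)$ up to constants. However, the two singular-value inequalities of Lemma~\ref{lemma:wainwright} actually yield $\|M_j-I_{s_j}\|_2\le 2\epsilon_j+\epsilon_j^2$ with $\epsilon_j=\delta_j+\sqrt{|S_j|/n}$, and the paper's display keeps only the quadratic term, silently dropping the linear term $2\epsilon_j$ that dominates when $\epsilon_j$ is small --- exactly the term producing your square-root scale. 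So the tension you flagged (``the raw bound produces the square-root scale'') is not bookkeeping you can defer: it is the unjustified step in the paper's own proof, and an honest execution of your (or the paper's) argument establishes only $O_p\!\left(\sqrt{(\log p_0)/n}\right)$ for the left-hand side of \eqref{lemma1:eq2}. Your derivation of \eqref{lemma1:eq3} from \eqref{lemma1:eq2} via $M_j^{-1}-I_{s_j}=-M_j^{-1}(M_j-I_{s_j})$ is equivalent to the paper's Taylor-expansion step and is fine: it transfers whatever rate \eqref{lemma1:eq2} actually delivers, using $d\log p_0=o(n_1+n_2)$ and $\max_{j\in\mathcal{J}}s_j\le d$ in the same way the paper does.
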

\begin{proof}[Proof of Lemma \ref{covariance}]
Recalling that 
\begin{equation*}
{\bf z}_i =
    \begin{cases}
      {\bf x}^{(1)}_k - \ve \mu_1 & k =1,2,\ldots, n_1\\
      {\bf x}^{(2)}_{k-n_1} - \ve \mu_2 & k = n_1+1,\ldots,n_1+n_2
    \end{cases},        
\end{equation*}
we write 
\begin{align*}
    \widehat{\Sigma} = &~ \frac{1}{n_1 + n_2}\sum_{k=1}^{n_1 + n_2}{\bf z}_k^{\otimes 2} + \frac{n_1}{n_1 + n_2}(\overline{\bf x}^{(1)} - \ve \mu_1)^{\otimes 2} + \frac{n_2}{n_1 + n_2}(\overline{\bf x}^{(2)} - \ve \mu_2)^{\otimes 2} \nonumber. 
\end{align*}
Letting ${\bf w}_k = \Sigma^{-1/2}{\bf z}_k$, it is easy to check that ${\bf w}_k \stackrel{i.i.d}{\sim} N(0,{I}_p)$. Thus, we have
\begin{align*}
   & \left\|\Sigma_{S_j, S_j}^{-1/2}\widehat{\Sigma}_{S_j, S_j}\Sigma_{S_j, S_j}^{-1/2} - I_{|S_j|}\right\|_2 \leq \left\| \frac{1}{n_1 + n_2}\sum_{k=1}^{n_1 + n_2} ({\bf w}_k^{\otimes 2})_{S_j, S_j} - {I}_{|S_j|} \right \|_2 \nonumber \\
    + &~ \frac{n_1}{n_1 + n_2}\left\|\Sigma_{S_j, S_j}^{-1/2}(\overline{\bf x}^{(1)} - \ve \mu_1)_{S_j}\right\|_2^2 + \frac{n_2}{n_1 + n_2}\left\|\Sigma_{S_j, S_j}^{-1/2}(\overline{\bf x}^{(2)} - \ve \mu_2)_{S_j}\right\|_2^2.
\end{align*}
Using Lemma \ref{lemma:wainwright}, we know for all $\delta > 0$, 
\begin{align*}
    \mathbb{P}\left\{ \sqrt{\gamma_{\max}\left( \frac{1}{n_1 + n_2}\sum_{k=1}^{n_1 + n_2} ({\bf w}_k^{\otimes 2})_{S_j, S_j} \right)} \geq 1 + \delta + \sqrt{\frac{|S_j|}{n_1 + n_2}} \right\} \leq e^{-\frac{(n_1 + n_2)\delta^2}{2}}
\end{align*}
and
\begin{align*}
    \mathbb{P}\left\{ \sqrt{\gamma_{\min}\left( \frac{1}{n_1 + n_2}\sum_{k=1}^{n_1 + n_2} ({\bf w}_k^{\otimes 2})_{S_j, S_j} \right)} \leq 1 - \delta - \sqrt{\frac{|S_j|}{n_1 + n_2}} \right\} \leq e^{-\frac{(n_1 + n_2)\delta^2}{2}}.
\end{align*}
These together yield
\begin{align*}
   \mathbb{P} \left\{ \left\| \frac{1}{n_1 + n_2}\sum_{k=1}^{n_1 + n_2} ({\bf w}_k^{\otimes 2})_{S_j, S_j} - {I}_{|S_j|} \right\|_2  \geq \left(\delta + \sqrt{\frac{|S_j|}{n_1 + n_2}}\right)^2 \right\} \leq e^{-\frac{(n_1 + n_2)\delta^2}{2}}.
\end{align*}
By taking 
$\delta = \sqrt{4|S_j| \log p_0/({n_1 + n_2})}$,
we have 
\[
 \mathbb{P} \left\{ |S_j|^{-1}\left\|\frac{1}{n_1 + n_2}\sum_{k=1}^{n_1 + n_2} ({\bf w}_k^{\otimes 2})_{S_j, S_j} - {I}_{|S_j|}  \right\|_2  \geq \frac{1}{n_1 + n_2}\left(\sqrt{2\log p_0} + 1\right)^2 \right\} \leq {p_0}^{-2|S_j|}.
\]
Taking the union over $j  \in \mathcal{J}$, we have
\begin{align*}
     &\mathbb{P} \left\{ \max_{j  \in \mathcal{J}}  \left\{ |S_j|^{-1}\left\| \frac{1}{n_1 + n_2}\sum_{k=1}^{n_1 + n_2} ({\bf w}_k^{\otimes 2})_{S_j, S_j} - {I}_{|S_j|} \right\|_2  \right\}
      \geq \frac{1}{n_1 + n_2}\left(\sqrt{2\log p_0} + 1\right)^2  \right\} \nonumber \\
     & \leq \sum_{j \in \mathcal{J}} p_0^{-2|S_j|} \leq p_0^{-1},
\end{align*}
which goes to $0$ as $n_1, n_2 \rightarrow \infty$.
This indicates that as $n_1, n_2 \rightarrow \infty$, we have
\[
\max_{j \in \mathcal{J}} |S_j|^{-1} \left\| \left(\frac{1}{n_1 + n_2}\sum_{k=1}^{n_1 + n_2} {\bf w}_k^{\otimes 2} - {I}_p \right)_{S_j, S_j} \right\|_2  = O_p \left( \frac{\log p_0}{n_1 + n_2} \right).
\]
Also,
\begin{align*}
    \frac{n_1}{n_1 + n_2}\left\|\Sigma_{S_j, S_j}^{-1/2}(\overline{\bf x}^{(1)} - \ve \mu_1)_{S_j}\right\|_2^2 =  \frac{n_1}{n_1 + n_2}\left\|n_1^{-1}\sum_{k = 1}^{n_1}({\bf w}_{k})_{S_j}\right\|_2^2.
\end{align*}
Note that $\left\|n_1^{-1/2}\sum_{k = 1}^{n_1}({\bf w}_{k})_{S_j}\right\|_2^2 \sim \chi^2_{|S_j|}$, therefore by using the standard tail bound for $\chi^2$ variables, we can get 
\begin{align*}
    \mathbb{P}\left\{ \left| \|n_1^{-1/2}\sum_{k = 1}^{n_1}({\bf w}_{k})_{S_j}\|_2^2 - |S_j| \right| > |S_j|n_1t \right\} \leq 2e^{-|S_j|n_1t/8} \leq 2e^{-n_1t/8}, ~\mbox{for all}~ t > \frac{1}{n_1}.
\end{align*}
Hence, taking the union over $j \in \mathcal{J}$ gives
\begin{align*}
    \mathbb{P}\left\{ \max_{j \in \mathcal{J}} \left| \|n_1^{-1/2}|S_j|^{-1}\sum_{k = 1}^{n_1}({\bf w}_{k})_{S_j}\|_2^2 - 1 \right| > n_1t \right\} \leq e^{\log(2p_0) - n_1t/8}, ~\mbox{for all}~ t > \frac{1}{n_1}.
\end{align*}
By taking $t = 16\log(2p_0)/n_1 > n_1^{-1}$, we get
\begin{align*}
    \mathbb{P}\left\{ \max_{j \in \mathcal{J}} \left| |S_j|^{-1} \|n_1^{-1/2}\sum_{k = 1}^{n_1}({\bf w}_{k})_{S_j}\|_2^2 - 1 \right| > 16\log (2p_0) \right\} \leq (2p_0)^{-1}.
\end{align*}
This indicates that
\[
\frac{n_1}{n_1 + n_2}\max_{j \in \mathcal{J}}|S_j|^{-1}\left\|n_1^{-1}\sum_{k = 1}^{n_1}({\bf w}_{k})_{S_j}\right\|_2^2 = O_p\left( \frac{\log p_0}{n_1 + n_2} \right).
\]
Similarly, we can show that
\begin{align}\label{pf:lemma1:8}
    \frac{n_2}{n_1 + n_2}\left\|\Sigma_{S_j, S_j}^{-1/2}(\overline{\bf x}^{(2)} - \ve \mu_2)_{S_j}\right\|_2^2 = O_p\left( \frac{\log p_0}{n_1 + n_2} \right).
\end{align}
This completes the proof of (\ref{lemma1:eq2}) in Lemma \ref{covariance}. To show (\ref{lemma1:eq3}) in Lemma \ref{covariance}, we only need to note that
\[
\|\Sigma_{S_j, S_j}^{1/2}\{\widehat{\Sigma}_{S_j, S_j}\}^{-1}\Sigma_{S_j, S_j}^{1/2} - I_{|S_j|}\|_2 \leq 
\frac{\|\Sigma_{S_j, S_j}^{-1/2}\widehat{\Sigma}_{S_j, S_j}\Sigma_{S_j, S_j}^{-1/2} - I_{|S_j|}\|_2}{1 - \|\Sigma_{S_j, S_j}^{-1/2}\widehat{\Sigma}_{S_j, S_j}\Sigma_{S_j, S_j}^{-1/2} - I_{|S_j|}\|_2};
\]
here, we use the matrix variant of Taylor expansion. 
It follows from (\ref{lemma1:eq2}) that 
if $d \log p_0 = o(n_1 + n_2)$, we then have $\|\Sigma_{S_j, S_j}^{1/2}\{\widehat{\Sigma}_{S_j, S_j}\}^{-1}\Sigma_{S_j, S_j}^{1/2} - I_{|S_j|}\|_2 = O(\|\Sigma_{S_j, S_j}^{-1/2}\widehat{\Sigma}_{S_j, S_j}\Sigma_{S_j, S_j}^{-1/2} - I_{|S_j|}\|_2)$. This yields (\ref{lemma1:eq3}).
\end{proof}
For any set $E$, let 
$A_{E} = \{ {\bf x}^{(l)}_{i, E}: l = 1,2; i = 1, \ldots, n_l\}$. 
The following lemma characterizes the properties of $\widehat{Q}$ and $\widehat{R}$. 
\begin{lemma}\label{independence}
Under Assumptions 1-3, for $j \in \mathcal{J}$ and any arbitrary set $E$ that satisfies $S_j \subseteq E \subseteq \{1, \ldots, j-1\}$, we have
\begin{enumerate}
    \item[(1)]  Conditioned on $A_E$,
    $\widehat{\bf q}_{j,S_j} \independent \widehat{r}_j$ 
    and $(\overline{\bf x}^{(1)} - \overline{\bf x}^{(2)})^\intercal {\bf e}_j \independent \widehat{r}_j$, 
    where ${\bf e}_j \in \mathbb{R}^p$ denotes the vector where the $j$-th entry equals 1 and 0 elsewhere.  
    \item[(2a)] Conditioned on $A_E$, $\widehat{\bf q}_{j,S_j}$ follows a multivariate normal distribution with mean ${\bf q}_{j,S_j}$ and covariance ${r_j}(n_1 + n_2)^{-1}\{\widehat{\Sigma}_{S_j, S_j}\}^{-1}$.
    \item[(2b)] Conditioned on $A_E$, $\widehat{\bf q}_{j,S_j} \independent (\overline{\bf x}^{(1)} - \overline{\bf x}^{(2)})^\intercal {\bf e}_j$. 
    \item[(2c)] For $k < j$, we have 
    \[
    \mathbb{E}\left[ (\widehat{\bf q}_{j,S_j} - {\bf q}_{j,S_j}) (\widehat{\bf q}_{k,S_k} - {\bf q}_{k,S_k})^\intercal \mid  A_{S_j \cup S_k \cup \{k\}} \right] = 0_{|S_j| \times |S_k|}. 
    \]
    \item[(3a)]  
    For $j = 1, \ldots, p$, we have
    $\mathbb{E}\left[\widehat{r}_j^{-1}\mid A_E \right] = {r}_j^{-1}$ and \(\mbox{Var}[\widehat{r}_j^{-1}\mid A_E] = {r_j^{-2}(n_1 + n_2 - |S_j| - 6)^{-1}}\).
    \item[(3b)] For $k < j$, 
    \[
    \mathbb{E}\left[ (\widehat{r}_j^{-1} - r_j^{-1})(\widehat{r}_k^{-1} - r_k^{-1}) \mid A_{S_j \cup S_k \cup \{k\}}\right] = 0. 
    \]
\end{enumerate}
\end{lemma}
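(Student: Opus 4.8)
The plan is to work conditionally on $A_E$ and reduce every claim to the classical theory of Gaussian linear regression. Fix $j \in \mathcal{J}$ and a set $E$ with $S_j \subseteq E \subseteq \{1,\ldots,j-1\}$. Since $E$ contains only nodes preceding $j$ in the topological order, every node in $E$ is a non-descendant of $j$, so the DAG Markov property gives that, conditionally on the parents $x_{S_j}$, the variable $x_j$ is independent of $x_{E\setminus S_j}$; combined with Assumptions 1--3 (Gaussianity, and a common within-group regression of $x_j$ on $x_{S_j}$ with equal coefficients and residual variance across the two groups, only the means $\ve\mu_1,\ve\mu_2$ differing), this shows that conditionally on $A_E$ the response vector ${\bf x}_{\cdot,j}=(x_{1,j},\ldots,x_{n_1+n_2,j})^\intercal$ obeys the linear model ${\bf x}_{\cdot,j} = {\bf 1}_1\alpha_1 + {\bf 1}_2\alpha_2 + X_c\,{\bf q}_{j,S_j} + \ve\epsilon$, where ${\bf 1}_l$ is the group-$l$ indicator, $X_c$ is the within-group-centered matrix of the $S_j$-coordinates, and $\ve\epsilon \sim N({\bf 0}, r_j I_{n_1+n_2})$ is independent of $A_E$. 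By Frisch--Waugh--Lovell, $\widehat{\bf q}_{j,S_j}=(X_c^\intercal X_c)^{-1}X_c^\intercal {\bf x}_{\cdot,j}$ with $X_c^\intercal X_c=(n_1+n_2)\widehat\Sigma_{S_j,S_j}$, and $\widehat r_j$ is a fixed multiple of the residual sum of squares $\mathrm{RSS}_j=\|(I-P){\bf x}_{\cdot,j}\|_2^2$, where $P$ projects onto the column space of the design. The two facts I will lean on repeatedly are: (i) the fitted and residual subspaces are orthogonal, so any linear functional ${\bf a}^\intercal{\bf x}_{\cdot,j}$ with ${\bf a}$ in the design column space is independent of $\mathrm{RSS}_j$; and (ii) two linear functionals ${\bf a}^\intercal{\bf x}_{\cdot,j}$ and ${\bf b}^\intercal{\bf x}_{\cdot,j}$ are conditionally independent iff ${\bf a}^\intercal{\bf b}=0$.

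Parts (1), (2a) and (2b) then follow directly. For (2a), $\widehat{\bf q}_{j,S_j}$ is an affine functional of the Gaussian vector ${\bf x}_{\cdot,j}$, hence conditionally Gaussian, with mean $(X_c^\intercal X_c)^{-1}X_c^\intercal\mathbb{E}[{\bf x}_{\cdot,j}\mid A_E]={\bf q}_{j,S_j}$ and covariance $r_j(X_c^\intercal X_c)^{-1}=r_j(n_1+n_2)^{-1}\{\widehat\Sigma_{S_j,S_j}\}^{-1}$. For the independence claims, $\widehat{\bf q}_{j,S_j}$ is a function of the fitted values and $\widehat r_j$ a function of the residuals, so $\widehat{\bf q}_{j,S_j}\independent\widehat r_j$ by (i); moreover $(\overline{\bf x}^{(1)}-\overline{\bf x}^{(2)})^\intercal{\bf e}_j={\bf d}^\intercal{\bf x}_{\cdot,j}$ with ${\bf d}=n_1^{-1}{\bf 1}_1-n_2^{-1}{\bf 1}_2$ lying in the design column space, so it too is independent of $\widehat r_j$ by (i), proving (1). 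Finally, for (2b) it suffices by (ii) to check that the coefficient vector of $\widehat{\bf q}_{j,S_j}$ is orthogonal to ${\bf d}$, i.e. $X_c^\intercal{\bf d}={\bf 0}$; this holds because the within-group-centered columns of $X_c$ sum to zero within each group while ${\bf d}$ is constant within each group.

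Parts (2c) and (3b) use the same conditioning trick, which is the cleanest structural idea in the proof. For $k<j$ take $E=S_j\cup S_k\cup\{k\}$, which is admissible since $S_j\subseteq E$ and $S_k\cup\{k\}\subseteq\{1,\ldots,k\}\subseteq\{1,\ldots,j-1\}$. Conditionally on $A_E$, the node-$k$ quantities $\widehat{\bf q}_{k,S_k}$ and $\widehat r_k$ are deterministic, being functions of $A_{S_k\cup\{k\}}\subseteq A_E$, whereas $\mathbb{E}[\widehat{\bf q}_{j,S_j}\mid A_E]={\bf q}_{j,S_j}$ and $\mathbb{E}[\widehat r_j^{-1}\mid A_E]=r_j^{-1}$ by (2a) and (3a); pulling out the $A_E$-measurable node-$k$ factors and invoking these conditional means gives the stated zero cross-expectations. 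For (3a) itself I would use that $\mathrm{RSS}_j/r_j\mid A_E\sim\chi^2_\nu$ with residual degrees of freedom $\nu=n_1+n_2-|S_j|-2$ (total sample size minus $|S_j|$ slopes and two group intercepts), a distribution free of $A_E$; the mean and variance of $\widehat r_j^{-1}$ then follow from the standard inverse-$\chi^2$ identities $\mathbb{E}[(\chi^2_\nu)^{-1}]=(\nu-2)^{-1}$ and $\mathrm{Var}[(\chi^2_\nu)^{-1}]=2(\nu-2)^{-2}(\nu-4)^{-1}$, once the normalizing constant in $\widehat r_j$ is fixed so that $\widehat r_j^{-1}$ is unbiased, the variance scaling as $r_j^{-2}(\nu-4)^{-1}$ with $\nu-4=n_1+n_2-|S_j|-6$ matching the stated form. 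The main obstacle I anticipate is not any single part but the careful setup in the first paragraph: one must verify that conditioning on $A_E$ really collapses to the single regression of $x_j$ on $x_{S_j}$ with i.i.d.\ $N(0,r_j)$ errors common to both groups, since every later step is a routine consequence of that Gaussian-regression picture together with the orthogonality facts (i)--(ii).
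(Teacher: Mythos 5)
Your proposal is correct and takes essentially the same route as the paper's proof: conditioning on $A_E$ reduces everything to classical Gaussian OLS with two group intercepts, parts (1), (2a), (2b) follow from fitted-value/residual orthogonality (the paper writes out your Frisch--Waugh--Lovell reduction explicitly via the projection $\mathcal{P}_W$ and the matrix $F = (X_{S_j}^\intercal(I-\mathcal{P}_W)X_{S_j})^{-1}X_{S_j}^\intercal(I-\mathcal{P}_W)$, with the same identity $X_{S_j}^\intercal(I-\mathcal{P}_W)X_{S_j} = (n_1+n_2)\widehat{\Sigma}_{S_j,S_j}$), and (2c), (3a), (3b) use the identical conditioning trick and inverse-$\chi^2$ facts with $\nu = n_1+n_2-|S_j|-2$. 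One remark: your correctly quoted identity $\mathrm{Var}[(\chi^2_\nu)^{-1}] = 2(\nu-2)^{-2}(\nu-4)^{-1}$, carried through exactly, gives $\mathrm{Var}[\widehat{r}_j^{-1}\mid A_E] = 2r_j^{-2}(n_1+n_2-|S_j|-6)^{-1}$, i.e.\ twice the value stated in the lemma --- the paper's own proof silently drops this factor of $2$ when invoking the inverse-$\chi^2$ variance, so your derivation in fact exposes a (harmless, order-preserving) typo rather than containing a gap.
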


\begin{proof}
We first recall some notations.
We denote by ${\bf 1}_{n_1+n_2}$ the $(n_1 + n_2)$-dimensional vector of all ones and let ${\bf g} = (g_1, \ldots, g_{n_1 + n_2})^\intercal$ denote the group indicator; that is, $g_i$ equals 1 for $i = 1, \ldots, n_1$ and equals 0 for $i = n_1 + 1, \ldots, n_1 + n_2$. Then, letting $X = \left[ {\bf x}_1^{(1)} ~\cdots ~ {\bf x}_{n_1}^{(1)} ~ {\bf x}_1^{(2)} ~\cdots~ {\bf x}_{n_2}^{(2)} \right]^\intercal$, we consider
\begin{equation}\label{rev:1}
    \left\|X_j - \theta_{1j}{\bf 1}_{n_1+n_2} - \theta_{2j}{\bf g} - X_{S_j}{\bf q}_{j,S_j} \right\|_2^2.
\end{equation}
Here, $X_j$ denotes the $j$-th column of $X$; $X_{S_j}$ denotes the sub-matrix of $X$ with columns indexed by $S_j$; ${\bf q}_{j,S_j}$ denotes the sub-vector of ${\bf q}_j$ with entries indexed by $S_j$. 
Thus, letting $W = [{\bf 1}_{n_1 + n_2} ~ {\bf g}]$,
we obtain the OLS estimator of ${\bf q}_{j,S_j}$ and
$(\theta_{1j}, \theta_{2j})^\intercal$:
\begin{align}\label{rev:2}
    \widehat{\bf q}_{j,S_j} = &~ \left(X_{S_j}^\intercal (I_{n_1+n_2} - \mathcal{P}_{W}) X_{S_j}\right)^{-1} X_{S_j}^\intercal (I_{n_1+n_2} - \mathcal{P}_{W}) X_{j}, \nonumber \\
    (\widehat{\theta}_{1j}, \widehat{\theta}_{2j})^\intercal = &~ (W^\intercal W)^{-1}W^\intercal \left(X_j - X_{S_j}\widehat{\bf q}_{j,S_j}\right),
\end{align}
where $\mathcal{P}_W = W(W^\intercal W)^{-1}W^\intercal$ is the orthogonal projection matrix onto the column space of $W$. Then, the ``plug-in" estimator of $r_j$ is
\[\widehat{r}_j = \frac{1}{n_1 + n_2 - |S_j| - 4} 
\left\|X_j - \widehat{\theta}_{1j}{\bf 1}_{n_1+n_2} - \widehat{\theta}_{2j}{\bf g} - X_{S_j}\widehat{\bf q}_{j,S_j} \right\|_2^2;
\]
for shorthand notation, let $\hat{\ve \varepsilon}_j = X_j - \widehat{\theta}_{1j}{\bf 1}_{n_1+n_2} - \widehat{\theta}_{2j}{\bf g} - X_{S_j}\widehat{\bf q}_{j,S_j}$.


\begin{itemize}
    \item[(1)] Note that under Gaussianity, conditioned on the predictors, the OLS coefficient estimator is independent of the OLS variance estimator (note that the group indicator $g_i$ is a fixed quantity). Thus, $\widehat{\bf q}_{j,S_j} \independent \widehat{r}_j$ given $A_{S_j}$. We next prove $(\overline{\bf x}^{(1)} - \overline{\bf x}^{(2)})^\intercal {\bf e}_j = \left(\overline{x}^{(1)}_{j} - \overline{x}^{(2)}_{j}\right) \independent \widehat{r}_j$ given $A_{S_j}$. 
    Under Gaussianity, we only need to show 
    \[
    \mbox{Cov}\left(\widehat{\ve \varepsilon}_j, \overline{x}_j^{(l)} \mid A_{S_j}\right) = {\bf 0}, ~~l = 1,2.
    \]
    We first consider $l = 1$. By (\ref{rev:1}),
    we get 
    \[
    \hat{\ve \varepsilon}_j = \left\{I_{n_1 + n_2} - \mathcal{P}_W(I_{n_1 + n_2} - X_{S_j}F) - X_{S_j}F \right\}X_j,
    \]
    where $F = \left(X_{S_j}^\intercal (I_{n_1 + n_2} - \mathcal{P}_{W}) X_{S_j}\right)^{-1} X_{S_j}^\intercal (I_{n_1 + n_2} - \mathcal{P}_{W})$. Then, since $\overline{x}^{(l)}_j = n_1^{-1}{\bf g}^\intercal X_j$, we have
    \[
    \mbox{Cov}\left(\widehat{\ve \varepsilon}_j, \overline{x}_j^{(1)} \mid A_{S_j}\right) = n_1^{-1}\left\{I_{n_1 + n_2} - \mathcal{P}_W(I_{n_1 + n_2} - X_{S_j}F) - X_{S_j}F \right\}\mbox{Cov}\left(X_j \mid A_{S_j}\right){\bf g}.
    \]
    Note that $X_j - \theta_{1j}{\bf 1}_{n_1 + n_2} - \theta_{2j}{\bf g} - X_{S_j}{\bf q}_{j,S_j} = \widetilde{\ve \epsilon}_j$, where
    $\widetilde{\ve \epsilon}_j = (\epsilon_{1j}, \ldots, \epsilon_{n_1+n_2, j})^\intercal$. Thus, 
    \begin{equation}\label{rev:3}
    \mbox{Cov}\left(X_j \mid A_{S_j}\right) = \mbox{Cov}(\widetilde{\ve \epsilon}_j) = r_j I_{n_1 + n_2}.
    \end{equation}
    This leads to 
    \[
    \mbox{Cov}\left(\widehat{\ve \varepsilon}_j, \overline{x}_j^{(1)} \mid A_{S_j}\right) = n_1^{-1}r_j\left\{I_{n_1+n_2} - \mathcal{P}_W(I_{n_1 + n_2} - X_{S_j}F) - X_{S_j}F \right\}{\bf g} = {\bf 0},
    \]
    because $\mathcal{P}_W {\bf g} = {\bf g}$. For $l = 2$, we notice $\overline{x}_{j}^{(2)} = n_2^{-1} ({\bf 1}_{n_1 + n_2} - {\bf g})^{\intercal} X_j$. Using similar tricks, we get
    \[
    \mbox{Cov}\left(\widehat{\ve \varepsilon}_j, \overline{x}_j^{(2)} \mid A_{S_j}\right) = n_2^{-1}r_j\left\{I_{n_1+n_2} - \mathcal{P}_W(I_{n_1 + n_2} - X_{S_j}F) - X_{S_j}F \right\}({\bf 1}_{n_1 + n_2} - {\bf g}) = {\bf 0},
    \]
    because $\mathcal{P}_W({\bf 1}_{n_1 + n_2} - {\bf g}) = {\bf 1}_{n_1 + n_2} - {\bf g}$. Therefore, we have $\widehat{\varepsilon}_j \independent \overline{x}_j^{(1)} - \overline{x}_j^{(2)}$ given $A_{S_j}$.

    To generalize this result to any arbitrary set $E$ that satisfies $S_j \subseteq E \subseteq \{1, \ldots, j-1\}$, we note that after conditioning on $\{ {\bf x}^{(l)}_{i, S_j}: l = 1,2; i = 1, \ldots, n_l\}$, the only random variable in $\widehat{r}_j$ is $\{\epsilon_{ij}, i = 1, \ldots, n_1 + n_2\}$. Since $\epsilon_{ij} \independent y_{il}$ for all $l < j$ and $i = 1, \ldots, n_1 + n_2$, we have  
    \[
    \left(\widehat{\bf q}_{j,S_j}, \widehat{r}_j, \overline{x}^{(1)}_j - \overline{x}^{(2)}_j\right) \mid A_E \stackrel{d}{=} \left(\widehat{\bf q}_{j,S_j}, \widehat{r}_j, \overline{x}^{(1)}_j - \overline{x}^{(2)}_j\right) \mid A_{S_j}
    \]
    for any $S_j \subseteq E \subseteq \{1, 2, \ldots, j-1\}$. This completes the proof. 
   \item[(2a)] 
   By (\ref{rev:1}) and (\ref{rev:3}), we immediately know that
   \[
   \widehat{\bf q}_{j, S_j} \mid A_{E} \sim N\left({\bf q}_{j,S_j}, r_j FF^\intercal\right).
   \]
   We next find the explicit form of $FF^\intercal$. Note that
   \[
   FF^\intercal = \left(X_{S_j}^\intercal (I_{n_1 + n_2} - \mathcal{P}_W) X_{S_j}\right)^{-1}.
   \]
   Also, 
   \[
   \mathcal{P}_W = W (W^\intercal W)^{-1} W^\intercal = \frac{1}{n_1n_2} W
   \begin{bmatrix}
   n_1 & -n_1 \\
   -n_1 & n_1 + n_2 
   \end{bmatrix}
   W^\intercal = \frac{1}{n_2}({\bf 1}_{n_1 + n_2} - {\bf g})({\bf 1}_{n_1 + n_2} - {\bf g})^\intercal + \frac{1}{n_1}{\bf gg}^\intercal.
   \]
   Note that 
   \[
   \frac{1}{n_2}X_{S_j}^\intercal ({\bf 1}_{n_1 + n_2} - {\bf g}) = \overline{\bf x}^{(2)}_{S_j}, ~~~\frac{1}{n_1}X_{S_j}^\intercal {\bf g} = \overline{\bf x}^{(1)}_{S_j},
   \]
   Then, we have 
   \[
   X_{S_j}^\intercal (I_{n_1 + n_2} - \mathcal{P}_W) X_{S_j} = \sum_{l=1}^2 \sum_{i=1}^{n_l} ({\bf x}_{i,S_j}^{(l)} - \overline{\bf x}^{(l)}_{S_j})({\bf x}_{i,S_j}^{(l)} - \overline{\bf x}^{(l)}_{S_j})^\intercal = (n_1 + n_2)\widehat{\Sigma}_{S_j, S_j}.
   \]
   Here, $\overline{\bf x}_{S_j}^{(l)}$ and ${\bf x}_{i, S_j}^{(l)}$, respectively, denote the sub-vector of $\overline{\bf x}^{(l)}$ and ${\bf x}_{i}^{(l)}$ with entries indexed by $S_j$. This completes the proof.

\item[(2b)] 
Under Gaussianity, we only need to show that
\[
\mbox{Cov}\left(\widehat{\bf q}_{j, S_j}, \overline{x}_{j}^{(l)} \mid A_E \right) = {\bf 0} ~~~ \mbox{ for } l = 1,2.
\]
Using similar tricks to (1), we have 
\[
\mbox{Cov}\left(\widehat{\bf q}_{j, S_j}, \overline{x}_{j}^{(1)} \mid A_E \right) = r_j F {\bf g} = {\bf 0}, ~~~ \mbox{Cov}\left(\widehat{\bf q}_{j, S_j}, \overline{x}_{j}^{(2)} \mid A_E \right) = r_j F ({\bf 1}_{n_1 + n_2} - {\bf g}) = {\bf 0}.
\]
This completes the proof.


\item[(2c)] 
Since $Q$ is strictly upper triangular, thus $j \notin S_k$ but possibly $k \in S_j$. 
Then, conditioned on $A_{S_j \cup S_k \cup \{k\}}$, $\widehat{\bf q}_{k,S_k}$ becomes a fixed quantity; however, since $k < j$ and $S_j \subseteq S_j \cup S_k \cup \{k\} \subseteq \{1, \ldots, j-1\}$, it follows from 2(a) that $\widehat{\bf q}_{j,S_j}$ follows a multivariate normal distribution with mean ${\bf q}_{j, S_j}$ and covariance $r_j(n_1 + n_2)^{-1}\{\widehat{\Sigma}_{S_j, S_j}\}^{-1}$. 

Therefore, 
 \[
    \mathbb{E}\left[ (\widehat{\bf q}_{j,S_j} - {\bf q}_{j,S_j}) (\widehat{\bf q}_{k,S_k} - {\bf q}_{k,S_k})^\intercal \mid  A_{S_j \cup S_k \cup \{k\}} \right] =  {\bf 0} \times (\widehat{\bf q}_{k,S_k} - {\bf q}_{k,S_k})^\intercal = {\bf 0},
    \]
which completes the proof.

\item[(3a)] Using the properties of the OLS variance estimator, conditioned on $A_E$, we have 
    $(n_1 + n_2 - |S_j| - 4) \widehat{r}_j r_j^{-1} \sim \chi^2_{n_1 + n_2 - 2 - |S_j|}$. This indicates that conditioned on $A_{E}$, $(n_1 + n_2 - |S_j| - 4)^{-1} \widehat{r}_j^{-1} r_j$ follows an inverse chi-square distribution with the degree of freedom $n_1 + n_2 - 2 - |S_j|$. Then, we know that $\mathbb{E}\left[(n_1 + n_2 - |S_j| - 4)^{-1} \widehat{r}_j^{-1} r_j \mid A_E \right] =  (n_1 + n_2 - |S_j| - 4)^{-1}$ and $\mbox{Var}\left[(n_1 + n_2 - |S_j| - 4)^{-1} \widehat{r}_j^{-1} r_j \mid A_E \right] = (n_1 + n_2 - |S_j| - 4)^{-2}(n_1 + n_2 - |S_j| - 6)^{-1}$. 
    
    This leads to 
    \[
    \mathbb{E}\left(\widehat{r}_j^{-1}\mid A_E \right) = {r}_j^{-1}, ~~~ \mbox{Var}\left(\widehat{r}_j^{-1}\mid A_E\right) = {r_j^{-2}(n_1 + n_2 - |S_j| - 6)^{-1}},
    \]
    which completes the proof.
    
    \item[(3b)] 
    Using similar arguments to (2c), we have for $k < j$,
    \[
    \mathbb{E}\left[ (\widehat{r}_j^{-1} - r_j^{-1})(\widehat{r}_k^{-1} - r_k^{-1}) \mid A_{S_j \cup S_k \cup \{k\}}\right] = \mathbb{E} \left[ \widehat{r}_j^{-1} - r_j^{-1} \mid A_{S_j \cup S_k \cup \{k\}} \right] \times (\widehat{r}_k^{-1} - r_k^{-1}) = 0,
    \]
    which completes the proof.
\end{itemize}

\end{proof}


With all the lemmas, we next prove Theorem 1.
\begin{proof}[Proof of Theorem 1]
Recall that $N = n_1n_2/(n_1 + n_2)$ and $A_{E} = \{ {\bf x}^{(l)}_{i, E}: l = 1,2; i = 1, \ldots, n_l\}$ for any set $E \subset \{1, \ldots, p\}$.
We write 
\begin{align}\label{pf:thm:1}
    T^2(\widehat{Q}, \widehat{R}) = & ~ N(\overline{\bf x}^{(1)} - \overline{\bf x}^{(2)})^\intercal \Sigma^{-1} (\overline{\bf x}^{(1)} - \overline{\bf x}^{(2)}) \nonumber \\
    + & ~ N(\overline{\bf x}^{(1)} - \overline{\bf x}^{(2)})^\intercal \left\{ \widehat{\Sigma}_{\text{DAG}}^{-1} - \Sigma^{-1} \right\} (\overline{\bf x}^{(1)} - \overline{\bf x}^{(2)}) \nonumber \\
    = &~ J_1 + J_2. 
\end{align}
It is easy to check that $J_1 \myeq \chi^2_p$.
Next, we provide an asymptotic bound for $J_2$. We first write
\begin{align}\label{add:1117:01}
J_2
    = &~ N\left\{\sum_{j \in \mathcal{J}} \widehat{r_j}^{-1} \{(\overline{\bf x}^{(1)} - \overline{\bf x}^{(2)})^\intercal({\bf e}_j - \widehat{\bf q}_j)\}^2 - \sum_{j \in \mathcal{J}} {r_j}^{-1} \{(\overline{\bf x}^{(1)} - \overline{\bf x}^{(2)})^\intercal({\bf e}_j - {\bf q}_j)\}^2 \right\}\nonumber \\
    + &~ N\left\{\sum_{j \notin \mathcal{J}} \widehat{r_j}^{-1} \{(\overline{\bf x}^{(1)} - \overline{\bf x}^{(2)})^\intercal{\bf e}_j\}^2 - \sum_{j \notin \mathcal{J}} {r_j}^{-1} \{(\overline{\bf x}^{(1)} - \overline{\bf x}^{(2)})^\intercal{\bf e}_j\}^2 \right\}\nonumber \\
    = &~ J_{21} + J_{22}.
\end{align}
For $J_{22}$, based on Lemma \ref{independence}, we have
\begin{align}\label{add:1117:02}
\mathbb{E}(J_{22}) = N\sum_{j \notin \mathcal{J}} \mathbb{E}\left[ \mathbb{E}\left[ \{(\overline{\bf x}^{(1)} - \overline{\bf x}^{(2)})^\intercal{\bf e}_j\}^2 \mid A_{S_j}\right] \mathbb{E}(\widehat{r}_j^{-1} - r_j^{-1} \mid A_{S_j} ) \right]= 0.
\end{align}
This double expectation trick will be used throughout the proof.
Also, we have
\begin{align}\label{add:1117:04}
    \mbox{Var}(J_{22}) = &~ N^2 \sum_{j \notin \mathcal{J}} \mathbb{E} \left[\{(\overline{\bf x}^{(1)} - \overline{\bf x}^{(2)})^\intercal{\bf e}_j r_j^{-1/2}\}^4 r_j^2(\widehat{r}_j^{-1} - r_j^{-1})^2 \right]\nonumber \\
    + &~ 2N^2 \sum_{j \notin \mathcal{J}}\sum_{k \notin \mathcal{J}, k < j} \mathbb{E} \bigg[ \{(\overline{\bf x}^{(1)} - \overline{\bf x}^{(2)})^\intercal{\bf e}_j r_j^{-1/2}\}^2 \{(\overline{\bf x}^{(1)} - \overline{\bf x}^{(2)})^\intercal{\bf e}_k r_k^{-1/2}\}^2  \times \nonumber\\
    &~ r_j(\widehat{r}_j^{-1} - r_j^{-1}) r_k (\widehat{r}_k^{-1} - r_j^{-1})\bigg] \nonumber \\
   = &~ J_{221} + J_{222}.
\end{align}
Since $|S_j| = 0$ for $j \notin \mathcal{J}$, using Lemma \ref{independence} and the double expectation trick, we have 
\[
J_{221} = N^2 \sum_{j \notin \mathcal{J}} \mathbb{E} [\{(\overline{\bf x}^{(1)} - \overline{\bf x}^{(2)})^\intercal{\bf e}_j r_j^{-1/2}\}^4] \mathbb{E} [r_j^2(\widehat{r}_j^{-1} - r_j^{-1})^2] = \frac{3(p - p_0)}{n_1 + n_2 - 6}. 
\]
Here we also use the fact that $\sqrt{N}(\overline{\bf x}^{(1)} - \overline{\bf x}^{(2)})^\intercal{\bf e}_j r_j^{-1/2} \sim N(0,1)$ for $j \notin \mathcal{J}$.
Similarly, we have 
\begin{align*}
    J_{222} = &~ 2N^2 \sum_{j \notin \mathcal{J}}\sum_{k \notin \mathcal{J}, k < j} 
\mathbb{E} \bigg[    
\mathbb{E} [(\overline{\bf x}^{(1)} - \overline{\bf x}^{(2)})^\intercal{\bf e}_j r_j^{-1/2}\}^2 r_j \mid B_{j,k}] \mathbb{E} [ (\widehat{r}_j^{-1} - r_j^{-1}) \mid B_{j,k} ]  \times \nonumber\\
  &~  \{(\overline{\bf x}^{(1)} - \overline{\bf x}^{(2)})^\intercal{\bf e}_k r_k^{-1/2}\}^2
   r_k (\widehat{r}_k^{-1} - r_k^{-1})] \bigg] \\
   =&~ 0.
\end{align*}
Combining results of $J_{221}$ and $J_{222}$, we have 
\[
\mbox{Var}(J_{22}) = 3(p - p_0)/(n_1 + n_2 - 6) = o(1).
\]

For $J_{21}$, we first write that
\begin{align}\label{add:1117:05}
    J_{21} \leq &~ 2N\left\{\sum_{j \in \mathcal{J}} \widehat{r_j}^{-1} \{(\overline{\bf x}^{(1)} - \overline{\bf x}^{(2)})^\intercal({\bf e}_j - {\bf q}_j)\}^2 - \sum_{j \in \mathcal{J}} {r_j}^{-1} \{(\overline{\bf x}^{(1)} - \overline{\bf x}^{(2)})^\intercal({\bf e}_j - {\bf q}_j)\}^2 \right\} + \nonumber \\
    ~& 2N\sum_{j \in \mathcal{J}} \widehat{r_j}^{-1} \{(\overline{\bf x}^{(1)} - \overline{\bf x}^{(2)})^\intercal(\widehat{\bf q}_j - {\bf q}_j)\}^2 \nonumber \\
    = &~ J_{211} + J_{212}.
\end{align}
For $J_{211}$, 
using Lemma 1.3, we have
\[
\mathbb{E}(J_{211}) = N\left\{\sum_{j \in \mathcal{J}} \mathbb{E} \left[ \mathbb{E}[\widehat{r_j}^{-1} - r_j^{-1} | A_{S_j}] \mathbb{E}[(\overline{\bf x}^{(1)} - \overline{\bf x}^{(2)})^\intercal({\bf e}_j - {\bf q}_j)\}^2 \mid A_{S_j}] \right] \right\}= 0. 
\]
Also, 
\begin{align*}
    \mbox{Var}(J_{211}) = &~ N^2 \sum_{j \in \mathcal{J}} \mathbb{E} \left[\{(\overline{\bf x}^{(1)} - \overline{\bf x}^{(2)})^\intercal({\bf e}_j - {\bf q}_j) r_j^{-1/2}\}^4 r_j^2(\widehat{r}_j^{-1} - r_j^{-1})^2\right] \nonumber \\
    + &~ 2N^2 \sum_{j \in \mathcal{J}}\sum_{k \in \mathcal{J}, k < j} \mathbb{E} \bigg[\{(\overline{\bf x}^{(1)} - \overline{\bf x}^{(2)})^\intercal({\bf e}_j - {\bf q}_j) r_j^{-1/2}\}^2 \{(\overline{\bf x}^{(1)} - \overline{\bf x}^{(2)})^\intercal({\bf e}_k - {\bf q}_k) r_k^{-1/2}\}^2 \nonumber\\
    &~ \times r_j(\widehat{r}_j^{-1} - r_j^{-1}) r_k (\widehat{r}_k^{-1} - r_j^{-1})\bigg].
\end{align*}
Noting that $\sqrt{N}(\overline{\bf x}^{(1)} - \overline{\bf x}^{(2)})^\intercal({\bf e}_j - {\bf q}_j) r_j^{-1/2} \sim N(0,1)$ for $j \in \mathcal{J}$, we use the same tricks as those for $J_{221}$ and $J_{222}$ to show
\[
\mbox{Var}(J_{221}) = \sum_{j \in \mathcal{J}}3/(n_1 + n_2 - |S_j| - 6) \leq 3p_0/(n_1 + n_2 - d - 6). 
\]

For $J_{212}$, we first note that  $(\overline{\bf x}^{(1)} - \overline{\bf x}^{(2)})^\intercal (\widehat{\bf q}_j - {\bf q}_j) = (\overline{\bf x}^{(1)} - \overline{\bf x}^{(2)})_{S_j}^\intercal \{\widehat{\Sigma}_{S_j, S_j}\}^{-1/2} \{\widehat{\Sigma}_{S_j, S_j}\}^{1/2} (\widehat{\bf q}_j - {\bf q}_j)_{S_j}$ for $j \in \mathcal{J}$. For ease of notation, we let $\ve \nu_j^\intercal =  (\overline{\bf x}^{(1)} - \overline{\bf x}^{(2)})_{S_j}^\intercal \{\widehat{\Sigma}_{S_j, S_j}\}^{-1/2}$ and $\ve \theta_{j} = \{\widehat{\Sigma}_{S_j, S_j}\}^{1/2}(\widehat{\bf q}_j - {\bf q}_j)_{S_j}$. It follows from Lemma 1.3 that $\mathbb{E}(\ve \theta_{j} | A_{S_j}) = {\bf 0}$ and $\mbox{Cov}(\ve \theta_{j} | A_{S_j}) = r_j(n_1 + n_2)^{-1} I_{S_j}$. Thus,
\begin{align*}
    &~ \mathbb{E}(J_{212}) =  N \sum_{j \in \mathcal{J}} \widehat{r}_j^{-1} \sum_{s \in S_j, t\in S_j} \mathbb{E} \left[\nu_{js} \nu_{jt} \theta_{js} \theta_{jt}\right]
    \nonumber \\
    = &~ N \sum_{j \in \mathcal{J}} \sum_{s \in S_j, t\in S_j} \mathbb{E} \left[ \mathbb{E}[\widehat{r_j}^{-1}\theta_{js} \theta_{jt}|A_{S_j}]  \nu_{js} \nu_{jt}\right] \nonumber \\
   = &~ N \sum_{j \in \mathcal{J}} r_j^{-1} \sum_{s \in S_j} \mathbb{E} \left[ \mathbb{E}[\theta_{js}^2|A_{S_j}] \nu_{js}^2\right] \nonumber \\
   = &~ \frac{N}{n_1 + n_2} \sum_{j \in \mathcal{J}} \mathbb{E}[ \ve \nu_j^\intercal \ve \nu_j ] \nonumber \\
   = &~ \sum_{j \in \mathcal{J}} \frac{|S_j|}{n_1 + n_2}  \left( 1 + N\mathbb{E} \bigg[(\overline{\bf x}^{(1)} - \overline{\bf x}^{(2)})_{S_j}^\intercal {\Sigma}_{S_j, S_j}^{-1/2}  \frac{{\Sigma}_{S_j, S_j}^{1/2}\{\widehat{\Sigma}_{S_j, S_j}\}^{-1}{\Sigma}_{S_j, S_j}^{1/2} - I_{|S_j|}}{|S_j|} {\Sigma}_{S_j, S_j}^{-1/2} (\overline{\bf x}^{(1)} - \overline{\bf x}^{(2)})_{S_j}) \bigg] \right)
\end{align*}
Since $d\log p_0 = o(n_1 + n_2)$, using Lemma \ref{covariance}, we have
\begin{align*}
    &~ N\mathbb{E} \left[ (\overline{\bf x}^{(1)} - \overline{\bf x}^{(2)})_{S_j}^\intercal {\Sigma}_{S_j, S_j}^{-1/2}  \frac{{\Sigma}_{S_j, S_j}^{1/2}\{\widehat{\Sigma}_{S_j, S_j}\}^{-1}{\Sigma}_{S_j, S_j}^{1/2} - I_{|S_j|}}{|S_j|} {\Sigma}_{S_j, S_j}^{-1/2} (\overline{\bf x}^{(1)} - \overline{\bf x}^{(2)})_{S_j}) \right] \nonumber \\
    \leq &~ N\mathbb{E} \left[ \max_{j \in \mathcal{J}} \frac{\left \|{\Sigma}_{S_j, S_j}^{1/2}\{\widehat{\Sigma}_{S_j, S_j}\}^{-1}{\Sigma}_{S_j, S_j}^{1/2} - I_{|S_j|} \right\|_2}{|S_j|} (\overline{\bf x}^{(1)} - \overline{\bf x}^{(2)})_{S_j}^\intercal \{{\Sigma}_{S_j, S_j}\}^{-1} (\overline{\bf x}^{(1)} - \overline{\bf x}^{(2)})_{S_j} \right] \nonumber \\
    \leq &~ O_p \left( \frac{d\log p_0}{n_1 + n_2} \right).
\end{align*}

Since $d\log p_0 (n_1 + n_2)^{-1} = O(1)$ and $N_{e} = \sum_{j \in \mathcal{J}} |S_j|$, we have
\[
{\mathbb{E}(J_{212})} = O\left( \frac{N_e }{n_1 + n_2} \right).
\]
Furthermore,
\begin{align*}
    \mathbb{E}(J_{212}^2) = &~ N^2 \mathbb{E}\left[ \left(\sum_{j \in \mathcal{J}} \widehat{r_j}^{-1} \{(\overline{\bf x}^{(1)} - \overline{\bf x}^{(2)})^\intercal(\widehat{\bf q}_j - {\bf q}_j)\}^2\right)^2 \right] \nonumber \\
    = &~ N^2 \sum_{j \in \mathcal{J},k \in \mathcal{J}} \mathbb{E} \left[ \widehat{r}_j^{-1} \sum_{s \in S_j, t \in S_j} \nu_{js} \theta_{js}
    \nu_{jt} \theta_{jt} \times 
    \widehat{r}_k^{-1} \sum_{u \in S_k, v \in S_k} \nu_{ku} \theta_{ku}
  \nu_{kv} \theta_{kv} \right] \nonumber \\ 
    = &~ N^2 \sum_{j \in \mathcal{J}} \mathbb{E} \left[ \widehat{r}_j^{-2} \{\sum_{s \in S_j, t \in S_j} \nu_{js} \theta_{js}
    \nu_{jt} \theta_{jt}\}^2 \right] \nonumber \\
    + &~ 2N^2 \sum_{j \in \mathcal{J},k \in \mathcal{J}, k < j} \mathbb{E} \left[ \widehat{r}_j^{-1} \sum_{s \in S_j, t \in S_j} \nu_{js} \theta_{js}
    \nu_{jt} \theta_{jt} \times 
    \widehat{r}_k^{-1} \sum_{u \in S_k, v \in S_k} \nu_{ku} \theta_{ku}
  \nu_{kv} \theta_{kv} \right] \nonumber \\ 
    = &~ J_{2121} + J_{2122}.
 \end{align*}
For $J_{2121}$, we have
\begin{align*}
    J_{2121} = &~ N^2 \sum_{j \in \mathcal{J}} \mathbb{E} \bigg[ \mathbb{E} \big( \widehat{r}_j^{-2}  \mid A_{S_j} \big) \mathbb{E} \big[ \big(\sum_{s \in S_j, t \in S_j} \nu_{js} \theta_{js}
    \nu_{jt} \theta_{jt}\big)^2 \mid A_{S_j} \big] \bigg] \nonumber \\
    = &~ N^2 \sum_{j \in \mathcal{J}} r_j^{-2} \left(1 + \frac{1}{n_1 + n_2 - |S_j| - 6}\right) \sum_{s,t,u,v \in S_j} \mathbb{E} \bigg[ \nu_{js}\nu_{jt}\nu_{ju}\nu_{jv} \mathbb{E} \big( \theta_{js}\theta_{jt}\theta_{ju}\theta_{jv} \mid A_{S_j} \big) \bigg].
\end{align*}
Since given $A_{S_j}$, 
$\theta_{js} \sim N(0, r_j(n_1 + n_2)^{-1})$ and 
$\theta_{js} \independent \theta_{jt}$ for all $s \neq t$ and $s, t \in S_j$, 
\begin{align*}
    J_{2121} = &~ N^2 \sum_{j \in \mathcal{J}} \frac{1}{(n_1 + n_2)^2} \left(1 + \frac{1}{n_1 + n_2 - |S_j| - 6}\right) \left( \sum_{s \in S_j} \mathbb{E} \big[ v_{js}^4 \big] 
    + \sum_{s, t \in S_j, s \neq t} \mathbb{E} \big[ \nu_{js}^2 \nu_{jt}^2 \big] \right).
\end{align*}
Since $\sqrt{N}\nu_{js} \sim N(0,1)$ for all $s$, and $\nu_{js} \independent \nu_{jt}$ for all $s \neq t$, we have 
\[
J_{2121}  = O\left(\frac{dN_e}{(n_1 + n_2)^2}\right).
\]

We can use similar tricks on $J_{2122}$. Specifically, 
\begin{align*}
    J_{2122} = &~ 2N^2 \sum_{j \in \mathcal{J},k \in \mathcal{J}, k < j} \mathbb{E} \bigg[ \mathbb{E} \big[ \widehat{r}_j^{-1} \sum_{s \in S_j, t \in S_j} \nu_{js} \theta_{js}
    \nu_{jt} \theta_{jt} \mid B_{j,k} \big]\times 
    \widehat{r}_k^{-1} \sum_{u \in S_k, v \in S_k} \nu_{ku} \theta_{ku}
  \nu_{kv} \theta_{kv} \bigg] \nonumber \\
  = &~ 2N^2 \sum_{j \in \mathcal{J},k \in \mathcal{J}, k < j} \mathbb{E} \bigg[ r_j^{-1} \sum_{s \in S_j} \nu_{js}^2 \mathbb{E} \big[ \theta_{js}^2 | B_{j,k} \big] \widehat{r}_k^{-1} \sum_{u \in S_k, v \in S_k} \nu_{ku} \theta_{ku}
  \nu_{kv} \theta_{kv} \bigg] \nonumber \\
  = &~ \frac{2N}{n_1 + n_2} \sum_{j \in \mathcal{J},k \in \mathcal{J}, k < j} |S_j| \mathbb{E} \big[ \widehat{r}_k^{-1} \sum_{u \in S_k, v \in S_k} \nu_{ku} \theta_{ku}
  \nu_{kv} \theta_{kv} \big] \nonumber \\
  = &~ \frac{2N}{(n_1 + n_2)} \sum_{j \in \mathcal{J},k \in \mathcal{J}, k < j} |S_j| \mathbb{E} \bigg[ \mathbb{E} \big[ r_k^{-1} | A_k \big]  \sum_{u \in S_k, v \in S_k} \nu_{ku} \nu_{kv} \mathbb{E} \big[ \theta_{ku} \theta_{kv} \mid A_k \big] \bigg] \nonumber \\
  = &~ \frac{2}{(n_1 + n_2)^2} \sum_{j \in \mathcal{J},k \in \mathcal{J}, k < j} |S_j| |S_k| \leq  \frac{N_e^2}{(n_1 + n_2)^2}.
\end{align*}
Since $d \leq N_e$, we have 
\[
\mbox{Var}(J_{212}) \leq \mathbb{E}(J_{212}^2) = O\left( \frac{N_e^2}{(n_1 + n_2)^2} \right).
\]
Combining all the assertions above, we have 
\[
\mathbb{E}(J_2) = O\left( \frac{N_e}{n_1 + n_2} \right), ~~~ \mbox{Var}(J_2) = O \left( \frac{p}{n_1 + n_2 - d - 6} + \frac{N_e^2}{(n_1 + n_2)^2}\right).
\]
Therefore, as $n_1, n_2, p \rightarrow \infty$,
\begin{itemize}
    \item if $N_e = o(n_1 + n_2)$ and $p = o(n_1 + n_2)$, then under $H_0$, $T^2(\widehat{Q}, \widehat{R}) \rightarrow \chi^2_p$ in distribution.
    \item if $N_e = o\left(\sqrt{p}(n_1 + n_2)\right)$, we have 
    \[
    \frac{T^2(\widehat{Q}, \widehat{R}) - p}{\sqrt{2p}} = \frac{J_1 - p}{\sqrt{2p}} + o_p(1) \rightarrow N(0, 1)
    \]
    in distribution. 
\end{itemize}
This completes the proof.

\end{proof}

\begin{proof}[Proof of Theorem 2]
 For ease of notation, let ${\bf d}_\mu = \ve \mu_1 - \ve \mu_2$ and ${\bf y}_d = \overline{\bf x}^{(1)} - \overline{\bf x}^{(2)} - {\bf d}_\mu$. Recall that $N = n_1n_2(n_1 + n_2)^{-1}$, $D_{\text{KL}} = \sum_{j=1}^p r_j^{-1}\{ {\bf d}_\mu^\intercal ({\bf e}_j - {\bf q}_j)\}^2$ and $D_{\text{DAG}} = \sum_{j \in \mathcal{J}} {\bf d}_{\mu, S_j}^\intercal \{\Sigma_{S_j, S_j}\}^{-1} {\bf d}_{\mu, S_j}$. 
We can then write
 \begin{align}\label{pf:thm2:1}
     T^2(\widehat{Q}, \widehat{R}) = N{\bf y}_d^\intercal \widehat{\Sigma}_{\text{DAG}}^{-1} {\bf y}_d + ND_{\text{KL}} + N {\bf d}_\mu^\intercal (\widehat{\Sigma}_{\text{DAG}}^{-1} - \Sigma^{-1}) {\bf d}_\mu + 
     2N{\bf d}_\mu^\intercal \widehat{\Sigma}_{\text{DAG}}^{-1}{\bf y}_d.
 \end{align}

Using (\ref{pf:thm2:1}), we can write 
\begin{align}\label{pf:thm2:5}
  \left|\frac{T^2(\widehat{Q}, \widehat{R}) - p}{\sqrt{2p}}\right| \geq   \frac{ND_{\text{KL}}}{\sqrt{2p}} - \left| \frac{ N {\bf y}_d^\intercal \widehat{\Sigma}^{-1}{\bf y}_d  - p}{\sqrt{2p}}\right| - \frac{2N}{\sqrt{2p}}\left|{\bf d}_\mu^\intercal \widehat{\Sigma}_{\text{DAG}}^{-1}{\bf y}_d\right| - \frac{N}{\sqrt{2p}} |{\bf d}_\mu^\intercal (\widehat{\Sigma}_{\text{DAG}}^{-1} - \Sigma^{-1}) {\bf d}_\mu|.
\end{align}
Since $d\log p_0 = O(n_1 + n_2)$ and $N_e = o(\sqrt{p}(n_1 + n_2))$, 
one can use the proof of Theorem 1 to show that 
\[
\frac{ N {\bf y}_d^\intercal \widehat{\Sigma}^{-1}{\bf y}_d  - p}{\sqrt{2p}} \stackrel{d}{\rightarrow} N(0,1) \mbox{ as } n_1, n_2 \rightarrow \infty.  
\]
We now show $\frac{N}{\sqrt{2p}}{\bf d}_\mu^\intercal (\widehat{\Sigma}_{\text{DAG}}^{-1} - \Sigma^{-1}) {\bf d}_{\mu} = o_p(1)$. Specifically, 
\begin{align*}
   {\bf d}_\mu^\intercal (\widehat{\Sigma}_{\text{DAG}}^{-1} - \Sigma^{-1}) {\bf d}_\mu =  \sum_{j = 1}^p \widehat{r}_j^{-1} \{ {\bf d}_\mu^\intercal ({\bf e}_j - \widehat{\bf q}_j)  \}^2 -   \sum_{j = 1}^p {r}_j^{-1} \{ {\bf d}_\mu^\intercal ({\bf e}_j - {\bf q}_j)  \}^2. 
\end{align*}
Again, using the double expectation trick, we have
\begin{align*}
   &~ \mathbb{E} \big[ \widehat{r}_j^{-1} \{ {\bf d}_\mu^\intercal ({\bf e}_j - \widehat{\bf q}_j)  \}^2 \big] = \mathbb{E} \bigg[ \mathbb{E} \big[\widehat{r}_j^{-1} \mid A_{S_j} \big] \mathbb{E} \big[ \{ {\bf d}_\mu^\intercal ({\bf e}_j - \widehat{\bf q}_j)  \}^2 \mid A_{S_j} \big] \bigg] \nonumber \\
   = &~ r_j^{-1} \mathbb{E} \bigg[  \{ {\bf d}_\mu^\intercal ({\bf e}_j - {\bf q}_j)  \}^2 + r_j (n_1 + n_2)^{-1} {\bf d}_{\mu, S_j}^\intercal \{\widehat{\Sigma}_{S_j, S_j}\}^{-1} {\bf d}_{\mu, S_j}\bigg].
\end{align*}
Since $d\log p_0 = o(n_1 + n_2)$, using Lemma \ref{covariance}, we have
\begin{align*}
   &~ \mathbb{E} \big[ {\bf d}_\mu^\intercal (\widehat{\Sigma}_{\text{DAG}}^{-1} - \Sigma^{-1}) {\bf d}_\mu\big] = \frac{1}{n_1 + n_2} \sum_{j \in \mathcal{J}} \mathbb{E} \big[ {\bf d}_{\mu, S_j}^\intercal \{\widehat{\Sigma}_{S_j, S_j}\}^{-1} {\bf d}_{\mu,S_j} \big] \nonumber \\
   = &~ \frac{D_{\text{DAG}}}{n_1 + n_2} \times O\left( 1 + \frac{d\log p_0}{(n_1 + n_2)}  \right).
\end{align*}
Since $D_{\text{DAG}} = O(p^{1/2}N^{-\gamma})$ for $\gamma > 1/2$ and $N/(n_1 + n_2) < 1$, we have
\[
\frac{N}{\sqrt{2p}} \mathbb{E} \big[ {\bf d}_\mu^\intercal (\widehat{\Sigma}_{\text{DAG}}^{-1} - \Sigma^{-1}) {\bf d}_\mu\big] = o(1).
\]

Also, 
\begin{align*}
     &~ \mathbb{E} \bigg[ \{ {\bf d}_\mu^\intercal (\widehat{\Sigma}_{\text{DAG}}^{-1} - \Sigma^{-1}) {\bf d}_\mu \} ^2\bigg] \leq 2 \mathbb{E}\left\{\sum_{j=1}^p (\widehat{r}_j^{-1} - r_j^{-1}) \{{\bf d}_u^\intercal ({\bf e}_j - \widehat{\bf q}_j) \}^2  \right\}^2 \nonumber \\
     + &~ 2 \mathbb{E}\left\{ \sum_{j=1}^p r_j^{-1} \left( \{{\bf d}_u^\intercal ({\bf e}_j - \widehat{\bf q}_j) \}^2  - \{{\bf d}_u^\intercal ({\bf e}_j - {\bf q}_j) \}^2 \right) \right\}^2 \nonumber \\
     = &~ 2(J_3 + J_4). 
\end{align*}
Note that 
\begin{align*}
    J_3 = &~ \sum_{j=1}^p \mathbb{E} \bigg[ (\widehat{r}_j^{-1} - r_j^{-1})^2 \{{\bf d}_u^\intercal ({\bf e}_j - \widehat{\bf q}_j) \}^4 \bigg] \nonumber \\
    + &~ 2\sum_{k < j} \mathbb{E}\bigg[ (\widehat{r}_j^{-1} - r_j^{-1}) \{{\bf d}_u^\intercal ({\bf e}_j - \widehat{\bf q}_j) \}^2 (\widehat{r}_k^{-1} - r_k^{-1}) \{{\bf d}_u^\intercal ({\bf e}_k - \widehat{\bf q}_k) \}^2 \bigg] \nonumber \\
    = &~ J_{31} + 2J_{32}.
\end{align*}
For $J_{31}$, we have 
\begin{align*}
    J_{31} = &~ \sum_{j=1}^p r_j^{-2}\frac{1}{n_1 + n_2 - |S_j| - 6} \mathbb{E} \bigg[ \mathbb{E} \big[ \{{\bf d}_u^\intercal ({\bf e}_j - \widehat{\bf q}_j) \}^4 \mid A_{S_j} \big] \bigg] \nonumber \\
   \leq &~ 8\sum_{j=1}^p r_j^{-2}\frac{1}{n_1 + n_2 - |S_j| - 6} \left( ({\bf d}_\mu^\intercal ({\bf e}_j - {\bf q}_j) )^4 + 3\mathbb{E}\left[\left(\frac{r_j {\bf d}_{\mu, S_j}^\intercal \{\widehat{\Sigma}_{S_j, S_j}\}^{-1} {\bf d}_{\mu, S_j}}{n_1 + n_2}\right)^2\right] I(j \in \mathcal{J}) \right) \nonumber \\
   \leq &~ \frac{8}{n_1 + n_2 - d - 6} \sum_{j=1}^p r_j^{-2} ({\bf d}_\mu^\intercal ({\bf e}_j - {\bf q}_j) )^4 + \frac{24}{n_1 + n_2 - d - 4}\sum_{j \in \mathcal{J}} \mathbb{E}\left[\left(\frac{{\bf d}_{\mu, S_j}^\intercal \{\widehat{\Sigma}_{S_j, S_j}\}^{-1} {\bf d}_{\mu, S_j}}{n_1 + n_2}\right)^2\right]. \nonumber \\
\end{align*}
Note that 
\begin{align*}
    \sum_{j \in \mathcal{J}} \mathbb{E}\left[\left(\frac{{\bf d}_{\mu, S_j}^\intercal \{\widehat{\Sigma}_{S_j, S_j}\}^{-1} {\bf d}_{\mu, S_j}}{n_1 + n_2}\right)^2\right] \leq \frac{2}{(n_1 + n_2)^2} \sum_{j \in \mathcal{J}} ({\bf d}_{\mu, S_j}^\intercal \{{\Sigma}_{S_j, S_j}\}^{-1} {\bf d}_{\mu, S_j})^2 \times O\left( 1 + \frac{d^2\log^2 p_0}{(n_1 + n_2)^2}\right).
\end{align*}
Also, 
\[
\sum_{j=1}^p r_j^{-2} ({\bf d}_\mu^\intercal({\bf e}_j - {\bf q}_j))^4 \leq D_{\text{KL}}^2;  \sum_{j \in \mathcal{J}} ({\bf d}_{\mu, S_j}^\intercal \{{\Sigma}_{S_j, S_j}\}^{-1} {\bf d}_{\mu, S_j})^2 \leq D_{\text{DAG}}^2.
\]
Since $d\log p_0 = O(n_1 + n_2)$, we then have
\[
J_{31} = O\left( \frac{D_{\text{KL}}^2}{n_1 + n_2 - d - 6} + \frac{D_{\text{DAG}}^2}{(n_1 + n_2)^2 (n_1 + n_2 - d - 6)}\right).
\]
Also, using the double expectation trick, it is easy to see that $J_{32} = 0$. Therefore, 
\[
J_3 = O\left( \frac{D_{\text{KL}}^2}{n_1 + n_2 - d - 4} + \frac{D_{\text{DAG}}^2}{(n_1 + n_2)^2 (n_1 + n_2 - d - 6)}\right).
\]
For $J_4$, we have 
\begin{align*}
    J_4 = &~ \sum_{j=1}^p \mathbb{E} \bigg[ r_j^{-2} \{\{{\bf d}_u^\intercal ({\bf e}_j - \widehat{\bf q}_j) \}^2  - \{{\bf d}_u^\intercal ({\bf e}_j - {\bf q}_j) \}^2\}^2  \bigg] \nonumber \\
    + &~ 2 \sum_{k < j} \mathbb{E} \bigg[ r_j^{-1}  [\{{\bf d}_u^\intercal ({\bf e}_j - \widehat{\bf q}_j) \}^2  - \{{\bf d}_u^\intercal ({\bf e}_j - {\bf q}_j) \}^2] r_k^{-1}  [\{{\bf d}_u^\intercal ({\bf e}_k - \widehat{\bf q}_k) \}^2  - \{{\bf d}_\mu^\intercal ({\bf e}_k - {\bf q}_k) \}^2] \bigg] \nonumber \\
    = &~ J_{41} + 2J_{42}.
\end{align*}
For $J_{41}$, using similar tricks, we can show that
\begin{align*}
    J_{41} \leq &~ \frac{8}{n_1 + n_2}\sum_{j \in \mathcal{J}} r_j^{-1}  \{{\bf d}_u^\intercal ({\bf e}_j - {\bf q}_j) \}^2 {\bf d}_{\mu, S_j}^\intercal \{{\Sigma}_{S_j, S_j}\}^{-1} {\bf d}_{\mu, S_j} \times O\left( 1+ \frac{d\log p_0}{n_1 + n_2}\right) \nonumber \\
    + &~ \frac{6}{(n_1 + n_2)^2} \sum_{j \in \mathcal{J}} ({\bf d}_{\mu, S_j}^\intercal \{{\Sigma}_{S_j, S_j}\}^{-1} {\bf d}_{\mu, S_j})^2 \times O\left( 1+ (\frac{d\log p_0}{n_1 + n_2})^2\right) \nonumber \\
    = &~ O \left( \frac{D_{\text{DL}} D_{\text{DAG}}}{n_1 + n_2} + \frac{D_{\text{DAG}}^2}{(n_1 + n_2)^2} \right).
\end{align*}
and 
\begin{align*}
    2J_{42} = &~ \frac{1}{(n_1 + n_2)^2} \sum_{k \neq j} {\bf d}_{\mu, S_j}^\intercal \{{\Sigma}_{S_j, S_j}\}^{-1} {\bf d}_{\mu, S_j} {\bf d}_{\mu, S_k}^\intercal \{{\Sigma}_{S_k, S_k}\}^{-1} {\bf d}_{\mu, S_k} \times O\left( (1+ \frac{d\log p_0}{n_1 + n_2})^2\right) \nonumber \\
    = &~ O\left( \frac{D_{\text{DAG}}^2}{(n_1 + n_2)^2} \right).
\end{align*}
Thus, since $D_{\text{DAG}} \asymp p^\beta, D_{\text{KL}} \asymp p^{1/2} N^{-\gamma}$ and $d = o(n_1 + n_2)$ for $\gamma > 1/2$ and $0 < \beta < 1/2$, combining the results for $J_3$ and $J_4$, 
one can see that 
\[
\frac{N^2}{2p}\mathbb{E} \bigg[ \{ {\bf d}_\mu^\intercal (\widehat{\Sigma}_{\text{DAG}}^{-1} - \Sigma^{-1}) {\bf d}_\mu \} ^2\bigg] = o(1).
\]
Therefore, 
\[
\frac{N}{\sqrt{2p}} \left( {\bf d}_\mu^\intercal (\widehat{\Sigma}_{\text{DAG}}^{-1} - \Sigma^{-1}) {\bf d}_\mu \right) = o_p(1).
\]

We now focus on ${\bf d}_\mu^\intercal \widehat{\Sigma}_{\text{DAG}}^{-1}{\bf y}_d$. Similarly, we can write 
\[
{\bf d}_\mu^\intercal \widehat{\Sigma}_{\text{DAG}}^{-1}{\bf y}_d = {\bf d}_\mu^\intercal {\Sigma}^{-1}{\bf y}_d + {\bf d}_\mu^\intercal (\widehat{\Sigma}_{\text{DAG}}^{-1} - {\Sigma}^{-1}){\bf y}_d = J_{5} + J_{6}
\]
For $J_{5}$, it is easy to see that $J_5 \sim N(0, N^{-1}D_{\text{KL}})$; this indicates that $NJ_5/\sqrt{2p} = o_p(1)$, since $D_{\text{KL}} = O( p^{1/2} N^{-\gamma})$ for $\gamma > 1/2$.
For $J_6$, we first write
\begin{align*}
    J_6 = \sum_{j=1}^p \widehat{r}_j^{-1} {\bf d}_\mu^\intercal ({\bf e}_j - \widehat{\bf q}_j) ({\bf e}_j - \widehat{\bf q}_j)^\intercal {\bf y}_d - \sum_{j=1}^p {r}_j^{-1} {\bf d}_\mu^\intercal ({\bf e}_j - {\bf q}_j) ({\bf e}_j - {\bf q}_j)^\intercal {\bf y}_d.
\end{align*}
Again, using the double expectation trick and Lemmas \ref{covariance} and \ref{independence}, we have
\begin{align*}
    \mathbb{E}[J_6] = &~ \sum_{j=1}^p \mathbb{E} \bigg[ \mathbb{E}\big[\widehat{r}_j^{-1} \mid A_{S_j} \big] {\bf d}_\mu^\intercal \mathbb{E} \big[ ({\bf e}_j - \widehat{\bf q}_j) ({\bf e}_j - \widehat{\bf q}_j)^\intercal {\bf y}_d \mid A_{S_j} \big] \bigg] \nonumber \\
    = &~ -2 \sum_{j=1}^p r_j^{-1} \mathbb{E} \bigg[ {\bf d}_\mu^\intercal ({\bf e}_j - {\bf q}_j) (\widehat{\bf q}_j - {\bf q}_j)^\intercal {\bf y}_d \bigg] + \sum_{j=1}^p r_j^{-1} \mathbb{E} \bigg[ {\bf d}_\mu^\intercal (\widehat{\bf q}_j - {\bf q}_j)(\widehat{\bf q}_j - {\bf q}_j)^\intercal {\bf y}_d \bigg] \nonumber \\
    = &~ \frac{1}{n_1 + n_2} \sum_{j \in \mathcal{J}} \mathbb{E} \bigg[ {\bf d}_{\mu, S_j}^\intercal \{\widehat{\Sigma}_{S_j, S_j}\}^{-1} {\bf y}_{d,S_j} \bigg] \nonumber \\
    = &~  \frac{1}{(n_1 + n_2)N} \sum_{j \in \mathcal{J}} {\bf d}_{\mu, S_j}^\intercal \{\Sigma_{S_j, S_j}\}^{-1} {\bf d}_{\mu, S_j} \times O\left( 1+ \frac{d\log p_0}{n_1 + n_2} \right) \nonumber \\
    = &~ O\left( \frac{D_{\text{DAG}}}{(n_1 + n_2)N} \right).
\end{align*}
Since $D_{\text{DAG}} = O(p^{\beta})$ for $0 < \beta < 1/2$, one can see that 
$\sqrt{N}\mathbb{E}(J_6)/\sqrt{p} = o(1)$. 
Similarly, 
\begin{align*}
    \mathbb{E} [J_6^2] \leq &~ 2\mathbb{E}\bigg[ \{\sum_{j=1}^p (\widehat{r}_j^{-1} - r_j^{-1}) {\bf d}_\mu^\intercal ({\bf e}_j - {\bf q}_j) ({\bf e}_j - {\bf q}_j)^\intercal {\bf y}_d \}^2  \bigg] \nonumber \\
    +&~ 16 \mathbb{E} \bigg[  \{\sum_{j=1}^p \widehat{r}_j^{-1}  {\bf d}_\mu^\intercal ({\bf e}_j - {\bf q}_j) (\widehat{\bf q}_j - {\bf q}_j)^\intercal {\bf y}_d \}^2
    \bigg] \nonumber \\
    + &~ 4 \mathbb{E} \bigg[ \{\sum_{j=1}^p \widehat{r}_j^{-1}  {\bf d}_\mu^\intercal (\widehat{\bf q}_j - {\bf q}_j) (\widehat{\bf q}_j - {\bf q}_j)^\intercal {\bf y}_d \}^2 \bigg]. \nonumber \\
    = &~ 2J_{61} + 16J_{62} + 4J_{63}.
\end{align*}
For $J_{61}$, we have 
\begin{align*}
    J_{61} \leq &~ \frac{1}{N(n_1 + n_2 - d - 6)}\sum_{j=1}^p r_j^{-2} \{{\bf d}_\mu^\intercal ({\bf e}_j - {\bf q}_j)\}^2 ({\bf e}_j - {\bf q}_j)^\intercal \Sigma ({\bf e}_j - {\bf q}_j) \nonumber \\
\leq &~ \frac{D_{\text{KL}}} {N(n_1 + n_2 - d - 6)}.
\end{align*}
Here, we use the fact that $({\bf e}_j - {\bf q}_j)^\intercal \Sigma ({\bf e}_j - {\bf q}_j)= r_j$.
Using similar tricks, we have
\begin{align*}
    J_{62} \leq \frac{D_{\text{KL}} d}{(n_1 + n_2)N} \times O\left( 1+ \frac{d\log p_0}{n_1 + n_2} \right). 
\end{align*}
and
\begin{align*}
    J_{63} = &~ \sum_{j=1}^p r_j^{-2} (1 + \frac{1}{n_1 + n_2 - |S_j| - 6}) \mathbb{E} \bigg[  \{ {\bf d}_\mu^\intercal (\widehat{\bf q}_j - {\bf q}_j) (\widehat{\bf q}_j - {\bf q}_j)^\intercal {\bf y}_d \}^2
    \bigg] \nonumber \\
    + &~ \frac{2}{(n_1 + n_2)^2} \sum_{k < j} \mathbb{E} \bigg[ {\bf d}_{\mu, S_j}^\intercal \{\widehat{\Sigma}_{S_j, S_j}\}^{-1} {\bf y}_{d, S_j}  {\bf d}_{\mu, S_k}^\intercal \{\widehat{\Sigma}_{S_k, S_k}\}^{-1} {\bf y}_{d, S_k} \bigg] \nonumber \\
    = &~ J_{631} + J_{632}.
\end{align*}
For $J_{631}$, recall that $\ve \theta_{S_j} = \{\widehat{\Sigma}_{S_j, S_j}\}^{1/2} (\widehat{\bf q}_j - {\bf q}_j)_{S_j}$. For ease of notations, let $\ve \psi_{j} =\{\widehat{\Sigma}_{S_j, S_j}\}^{-1/2} {\bf d}_{\mu,S_j} $ and $\ve \phi_j = \{\widehat{\Sigma}_{S_j, S_j}\}^{-1/2}{\bf y}_{d, S_j}$. Then, using the fact that $\ve \theta_{S_j} \mid A_{S_j} \sim N(0, r_j (n_1 + n_2)^{-1})$, we have
    \begin{align*}
       &~  \mathbb{E} \bigg[  \{ {\bf d}_\mu^\intercal (\widehat{\bf q}_j - {\bf q}_j) (\widehat{\bf q}_j - {\bf q}_j)^\intercal {\bf y}_d \}^2 \bigg] =  \mathbb{E} \bigg[ \sum_{s, t, u, v \in \mathcal{S}_j} \phi_{js}\phi_{jt} \psi_{ju}\psi_{jv} \theta_{js}\theta_{jt}\theta_{ju}\theta_{jv}.
       \bigg] \nonumber \\
       \leq &~ \frac{4r_j^2}{(n_1 + n_2)^2} \mathbb{E} \bigg[  {\bf d}_{\mu, S_j}^\intercal \{\widehat{\Sigma}_{S_j, S_j}\}^{-1} {\bf d}_{\mu, S_j} \times 
       {\bf y}_{d, S_j}^\intercal \{\widehat{\Sigma}_{S_j, S_j}\}^{-1} {\bf y}_{d, S_j}\bigg] \nonumber \\
       + &~ \frac{2r_j^2}{(n_1 + n_2)^2} \mathbb{E} \bigg[ \left( {\bf d}_{\mu, S_j}^\intercal \{\widehat{\Sigma}_{S_j, S_j}\}^{-1} {\bf y}_{d, S_j} \right)^2\bigg].
    \end{align*}
  Since ${\bf y}_{d,S_j} \sim N(0, N^{-1} \Sigma_{S_j, S_j})$, by some algebra, one can see that 
  \begin{align*}
      J_{631} \leq &~ \frac{4d}{N (n_1 + n_2)^2} (1 + \frac{1}{n_1 + n_2 - d - 4}) \sum_{j \in \mathcal{J}}{\bf d}_{\mu, S_j}^\intercal \{\Sigma_{S_j, S_j}\}^{-1} {\bf d}_{\mu, S_j} \times O\left( (1 + \frac{d\log p_0}{n_1 + n_2})^2 \right) \nonumber \\
      + &~ \frac{2}{N(n_1 + n_2)^2} \sum_{j \in \mathcal{J}}{\bf d}_{\mu, S_j}^\intercal \{\Sigma_{S_j, S_j}\}^{-1} {\bf d}_{\mu, S_j} \times O\left( (1 + \frac{d\log p_0}{n_1 + n_2})^2 \right) \nonumber \\
      = &~ O\left( \frac{4d D_{\text{DAG}}  }{N(n_1 + n_2)^2} \right).
  \end{align*}
  Similarly, 
  \begin{align*}
      J_{632} \leq &~ \frac{1}{N(n_1 + n_2)^2} \sum_{j \in \mathcal{J}}{\bf d}_{\mu, S_j}^\intercal \{\Sigma_{S_j, S_j}\}^{-1} {\bf d}_{\mu, S_j} \times O\left( (1 + \frac{d\log p_0}{n_1 + n_2})^2 \right) \nonumber \\
      = &~ \frac{D_{\text{DAG}}}{N(n_1 + n_2)^2}. 
  \end{align*}
Since $D_{\text{KL}} = O(p^{1/2} N^{-\gamma})$, $D_{\text{DAG}} = O(p^\beta)$ and $d = o(n_1 + n_2)$ for $\gamma > 1/2$ and $0 < \beta < 1/2$, one can see that 
  $N^2\mathbb{E}(J_6^2)/p = o(1)$. 
Combining all the assertions above, as $n_1, n_2 \rightarrow \infty$,
 we have
  \begin{align*}
  \mathbb{P}\left( \left|\frac{T^2(\widehat{Q}, \widehat{P}) - p}{\sqrt{2p}}\right| \geq z_{\alpha/2} \right) \geq \mathbb{P} \left( |Z| \leq \frac{ND_{\text{KL}}}{\sqrt{2p}} - z_{\alpha/2} \right). 
\end{align*}
Furthermore, if $1/2 < \gamma < 1$, then
\[
 \mathbb{P} \left( |Z| \leq \frac{ND_{\text{KL}}}{\sqrt{2p}} - z_{\alpha/2} \right) = 1 \mbox{ as } n_1, n_2 \rightarrow \infty.
\]
This completes the proof.   
\end{proof}

\section{More information on the simulation studies}\label{supp2}

\subsection{Additional simulation results assuming accurate edge information}\label{sim.sec1}
\begin{figure}[ht!]
\centering
    \includegraphics[width=\textwidth]{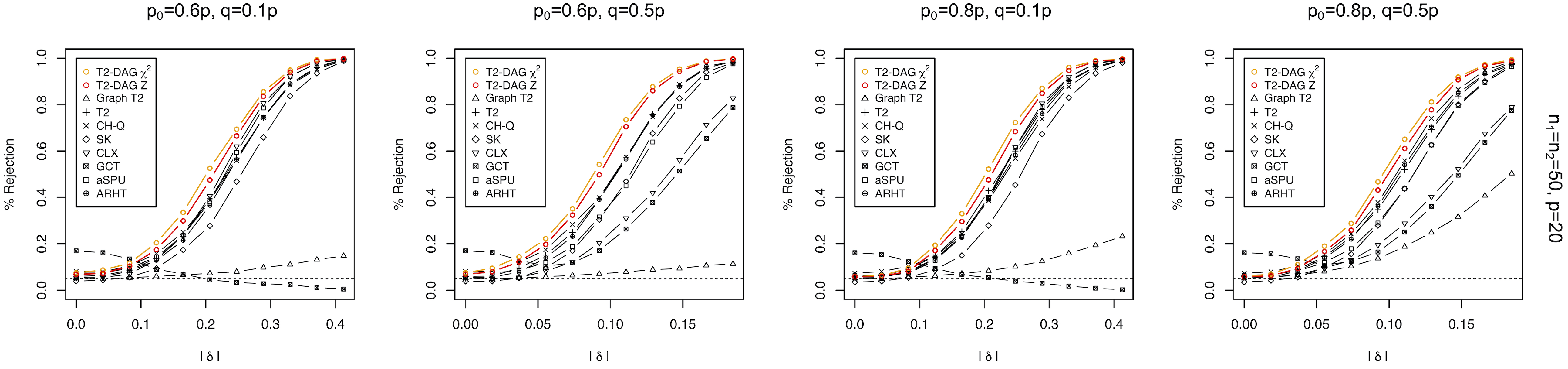}
	\caption*{Figure S1. Empirical type I error rates ($\delta = 0$) and powers ($|\delta| > 0)$ under $\bf{n_1=n_2=50}$ and $\bf{p=20}$, with the number of children nodes in the DAG set to $p_0=0.4p$ (columns 1 and 2) or $p_0=0.8p$ (columns 3 and 4), and the number of non-zero signals set to $q = 0.1p$ (columns 1 and 3) or $0.5p$ (columns 2 and 4).}
\label{suppfig1}
\end{figure}
\FloatBarrier
\noindent

\begin{figure}[ht!]
\centering
    \includegraphics[width=\textwidth]{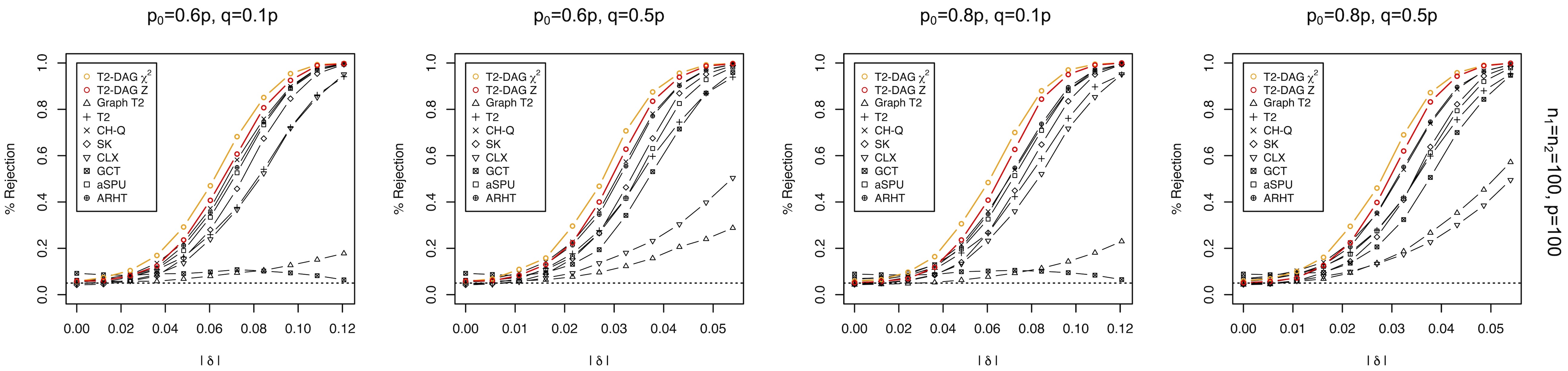}
	\caption*{Figure S2. Empirical type I error rates ($\delta = 0$) and powers ($|\delta| > 0)$ under $\bf{n_1=n_2=100}$ and $\bf{p=100}$, with 
	$p_0=0.4p$ (columns 1 and 2) or $p_0=0.8p$ (columns 3 and 4), and 
	$q = 0.1p$ (columns 1 and 3) or $0.5p$ (columns 2 and 4).}
\label{suppfig2}
\end{figure}
\FloatBarrier
\noindent

\begin{figure}[ht!]
\centering
    \includegraphics[width=\textwidth]{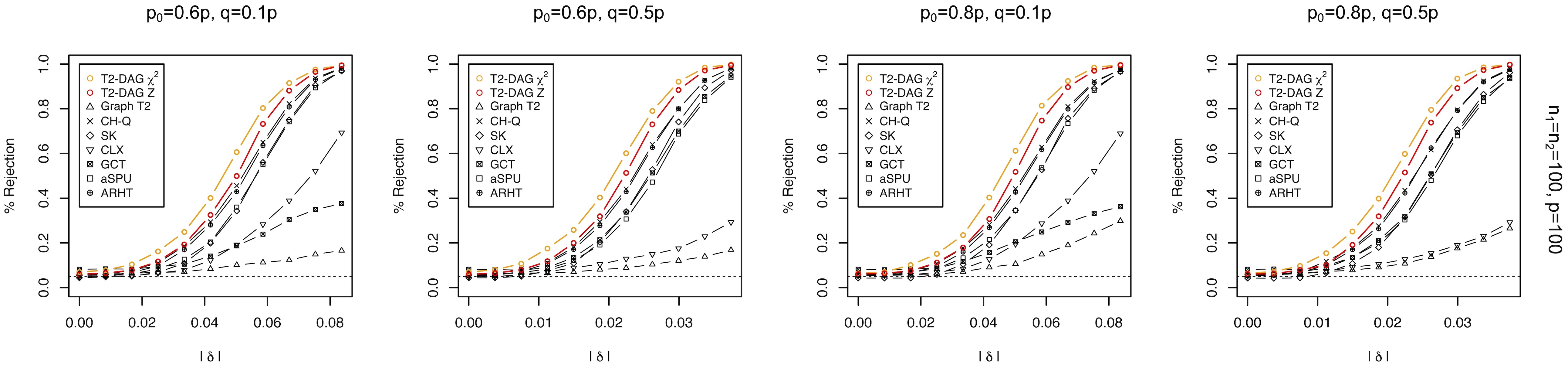}
	\caption*{Figure S3. Empirical type I error rates ($\delta = 0$) and powers ($|\delta| > 0)$ under $\bf{n_1=n_2=100}$ and $\bf{p=300}$, with 
	$p_0=0.4p$ (columns 1 and 2) or $p_0=0.8p$ (columns 3 and 4), and 
	$q = 0.1p$ (columns 1 and 3) or $0.5p$ (columns 2 and 4).}
\label{suppfig3}
\end{figure}
\FloatBarrier

\subsection{Additional simulation results given mis-specified edge information}
\begin{figure}[ht!]
\centering
\includegraphics[width=0.7\textwidth]{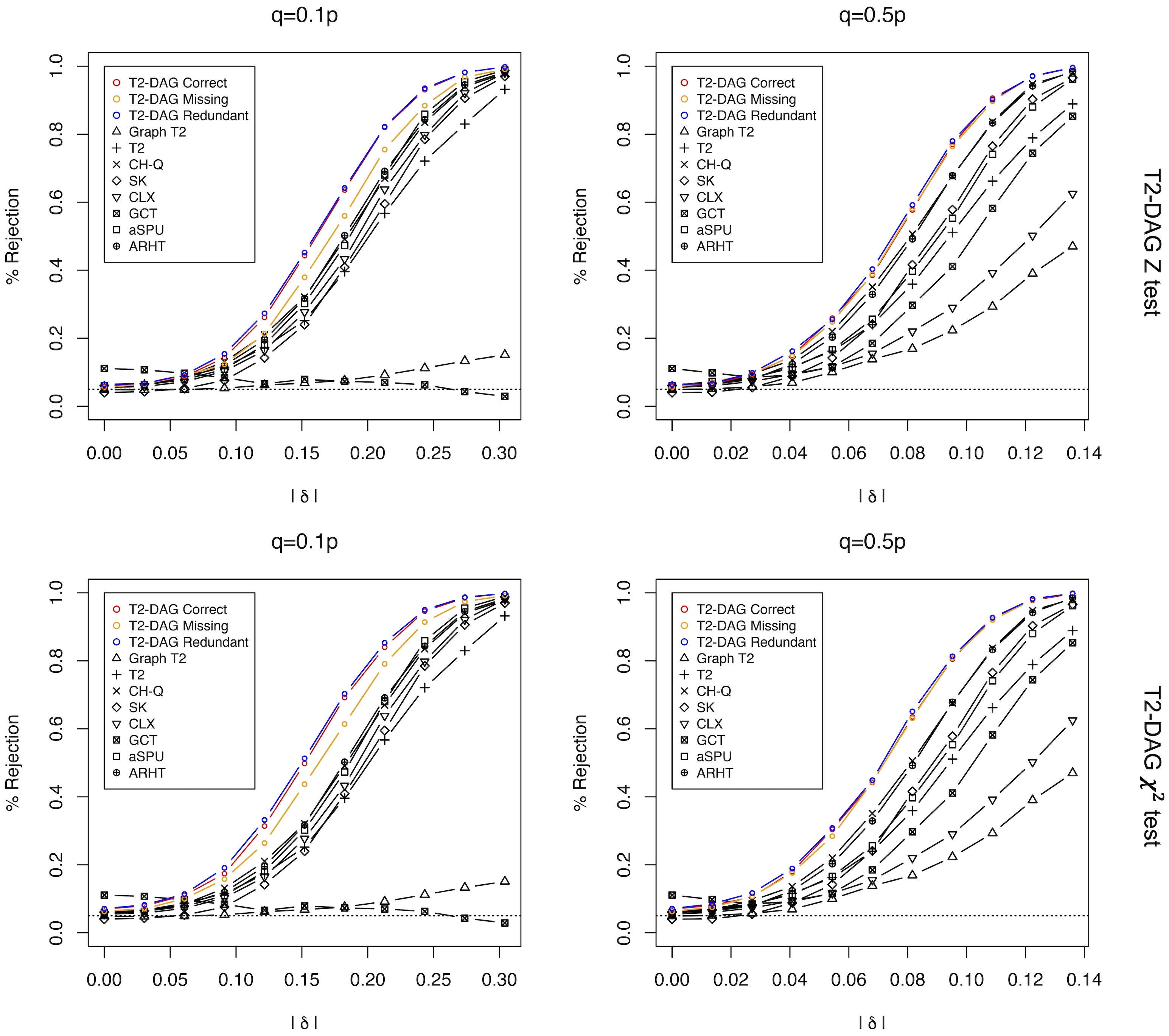}
	\caption*{Figure S4. Empirical type I error rates ($\delta = 0$) and powers  ($|\delta| > 0)$ of T2-DAG $Z$ (top panel) and T2-DAG $\chi^2$ (bottom panel) under $n_1=n_2=50$ and $p=50$, with 
	$p_0=0.8p$, and 
	$q = 0.1p$ (left panel) or $0.5p$ (right panel).
	The T2-DAG was applied based on $A$ (``T2-DAG Correct''), $A_1$ (``T2-DAG Missing''), or $A_2$ (``T2-DAG Redundant''). The Graph T2 was applied based on the true adjacency matrix $A$.}
\label{suppfig4}
\end{figure}
\FloatBarrier
\noindent

\begin{figure}[ht!]
\centering
\includegraphics[width=0.7\textwidth]{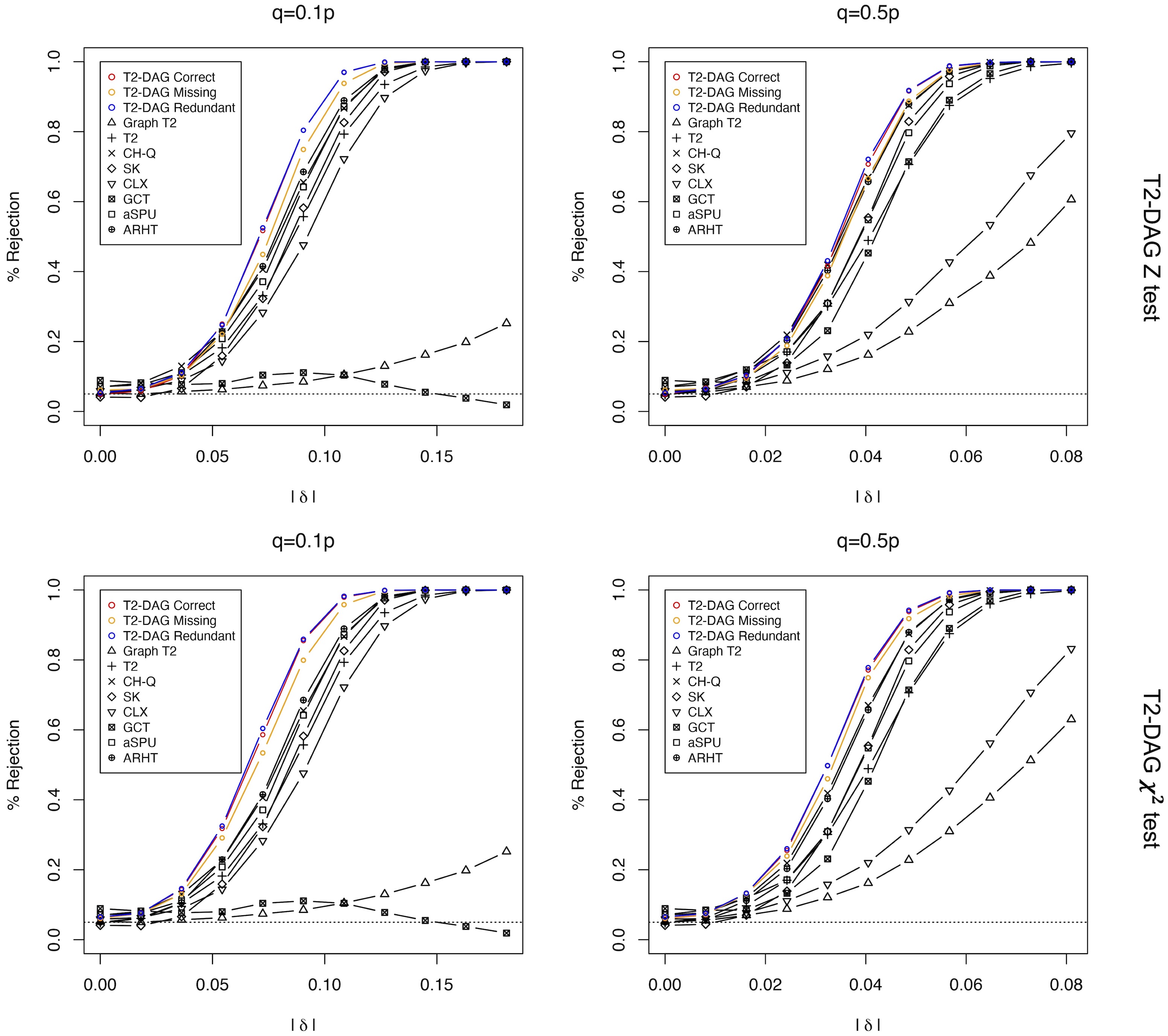}
	\caption*{Figure S5. Empirical type I error rates ($\delta = 0$) and powers ($|\delta| > 0)$ of T2-DAG $Z$ (top panel) and T2-DAG $\chi^2$ (bottom panel) under $n_1=n_2=100$ and $p=100$, with 
	$p_0=0.8p$, and 
	$q = 0.1p$ (left panel) or $0.5p$ (right panel).  
	The T2-DAG was applied based on $A$ (``T2-DAG Correct''), $A_1$ (``T2-DAG Missing''), or $A_2$ (``T2-DAG Redundant''). The Graph T2 was applied based on the true adjacency matrix $A$.}
\label{suppfig5}
\end{figure}
\FloatBarrier

\begin{figure}[ht!]
\centering
\includegraphics[width=0.7\textwidth]{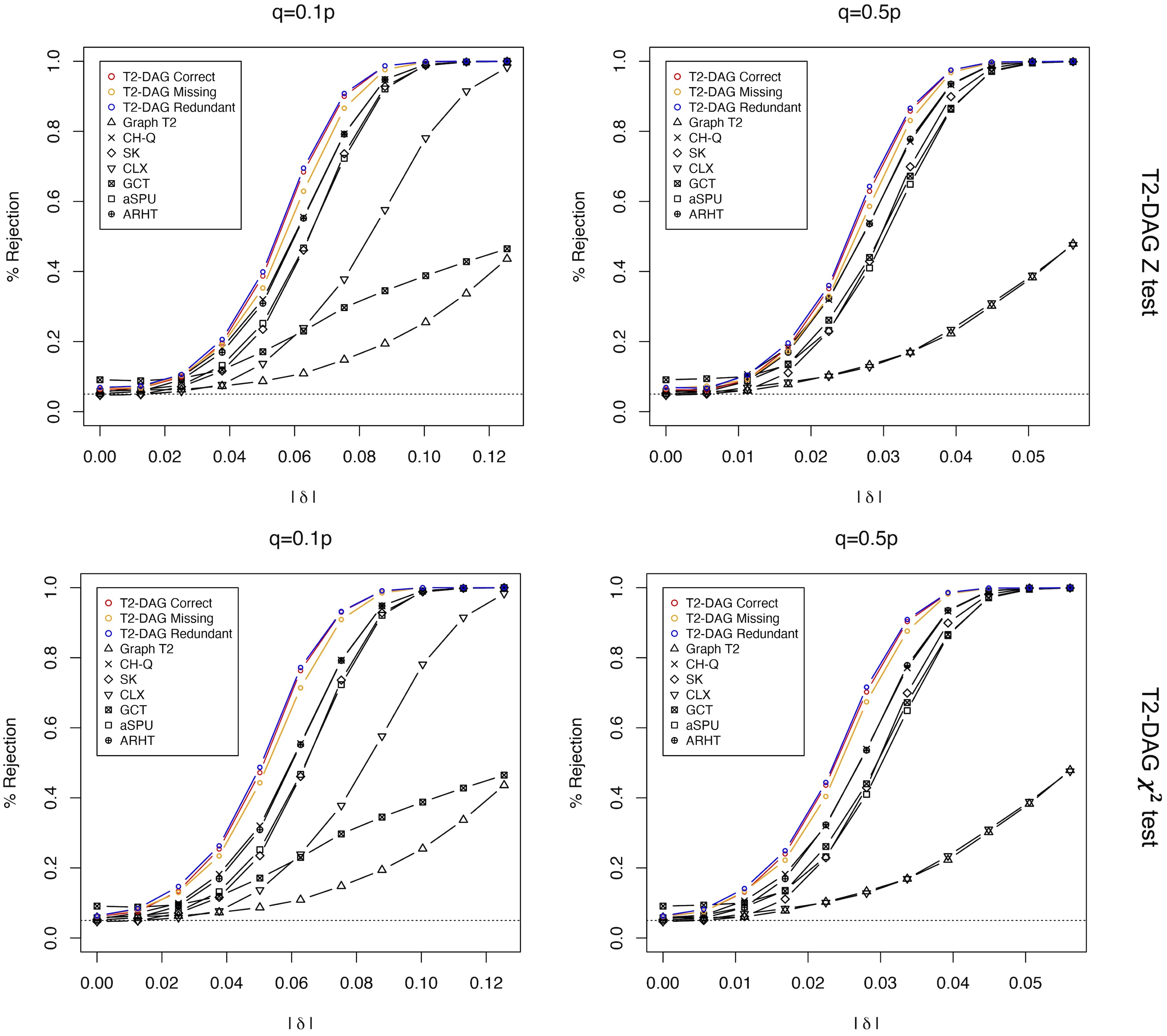}
	\caption*{Figure S6. Empirical type I error rates ($\delta = 0$) and powers ($|\delta| > 0)$ of T2-DAG $Z$ (top panel) and T2-DAG $\chi^2$ (bottom panel) under $n_1=n_2=100$ and $p=300$, with 
	$p_0=0.8p$, and 
	$q = 0.1p$ (left panel) or $0.5p$ (right panel).  
	The T2-DAG was applied based on $A$ (``T2-DAG Correct''), $A_1$ (``T2-DAG Missing''), or $A_2$ (``T2-DAG Redundant''). The Graph T2 was applied based on the true adjacency matrix $A$.}
\label{suppfig6}
\end{figure}
\FloatBarrier
\noindent

\subsection{Additional simulation results in the presence of unadjusted confounding effects}
\begin{figure}[ht!]
\centering
    \includegraphics[width=0.9\textwidth]{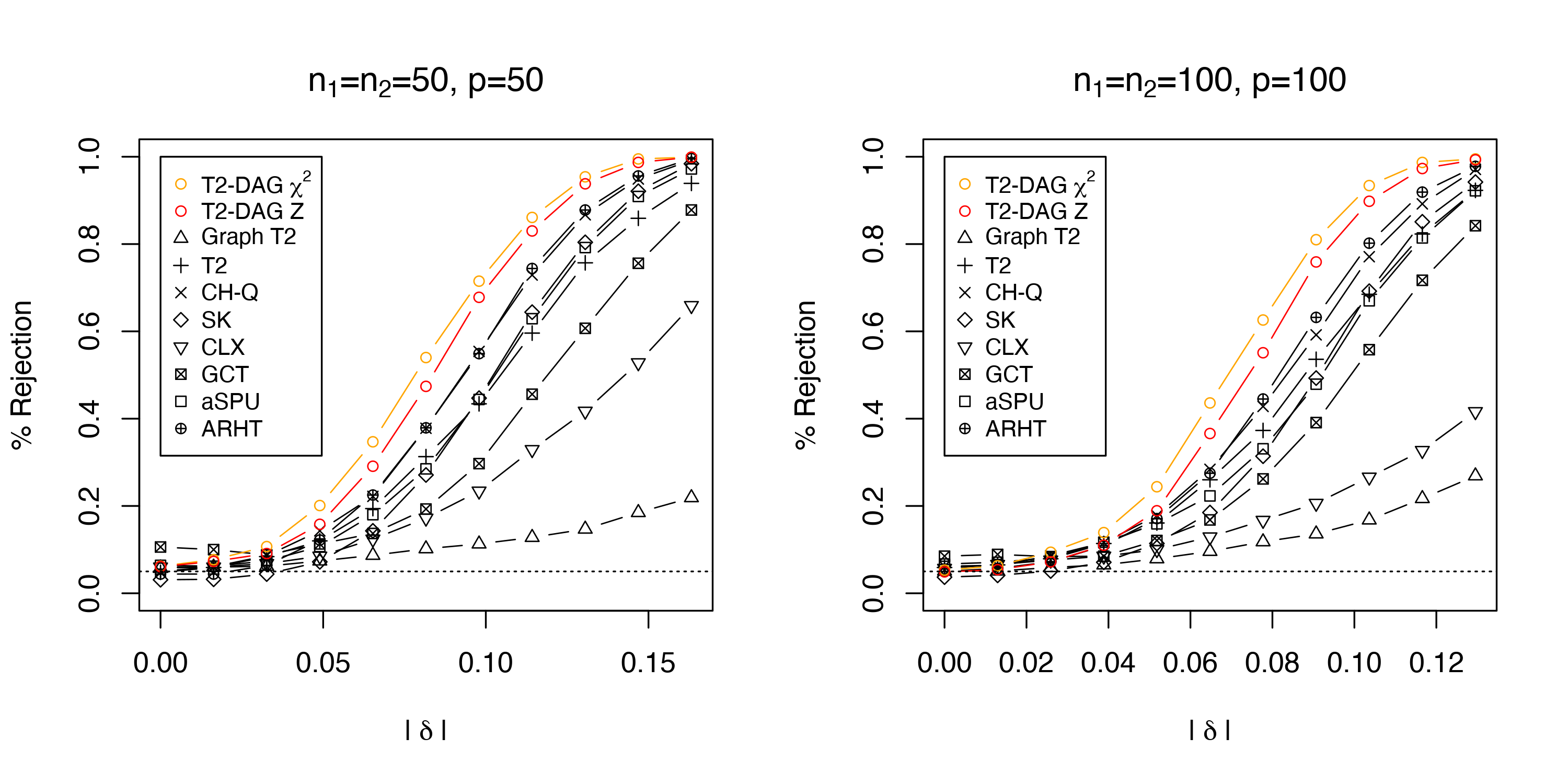}
\caption*{Figure S7. Empirical type I error rates ($\delta = 0$) and powers ($|\delta| > 0)$ of the various tests under $n_1=n_2=50$ and $p=50$ (column 1) or $n_1=n_2=100$ and $p=100$ (column 2), assuming two continuous confounders, with 
$p_0=0.8p$, and 
$q = 0.5p$.}
\label{fig.confounder}
\end{figure}
\FloatBarrier
\noindent

\subsection{Additional simulation results given non-Gaussian error terms}
We conducted a sensitivity analysis on T2-DAG in terms of the distribution of the error terms. Specifically, we considered three alternative non-Gaussian distributions: (1) uniform distribution, where $\epsilon_{ij}$ was generated from $\text{Unif}(-\sqrt(3r_j), \sqrt(3r_j))$, with a zero mean and variance equal to $r_j$ as in the simulation setting in Section 3.2 of the main manuscript; (2) Gamma distribution, where $\epsilon_{ij}$ was generated from a Gamma distribution with a shape parameter $s=10$ and a scale parameter $\sqrt{r_j/s}$ and then subtracted by its mean, $s\sqrt{r_j/s}$, which gave a zero mean and variance equal to $r_j$; and (3) log-normal distribution, where $\epsilon_{ij}$ was generated from a log-normal distribution with zero mean and variance $\tau^2=0.16$ on the log scale and then subtracted by its mean, $\exp(\tau^2/2)$.

\begin{figure}[ht!]
\centering
    \includegraphics[width=\textwidth]{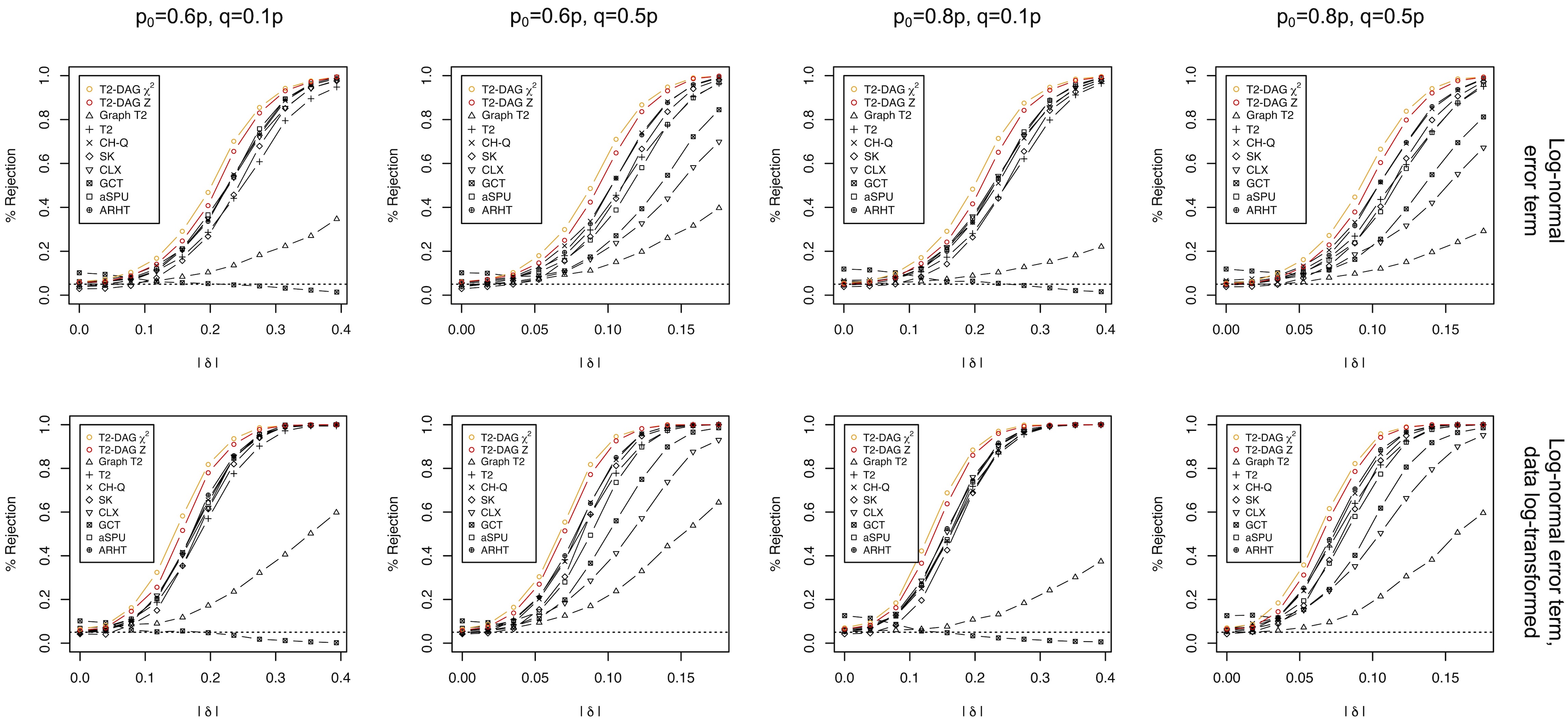}
\caption*{Figure S8. Empirical type I error rates ($\delta = 0$) and powers ($|\delta| > 0)$ of the various tests under $\bf{n_1=n_2=50}$ and $\bf{p=40}$, with error terms generated from \textbf{log-normal} distributions based on the raw data (row 1) or log-transformed data (row 2).
The number of children nodes in the DAG is set to $p_0=0.6p$ (columns 1 and 2) or $p_0=0.8p$ (columns 3 and 4), and the number of non-zero signals is set to $q = 0.1p$ (columns 1 and 3) or $0.5p$ (columns 2 and 4).}
\end{figure}
\FloatBarrier
\noindent

\begin{figure}[ht!]
\centering
    \includegraphics[width=\textwidth]{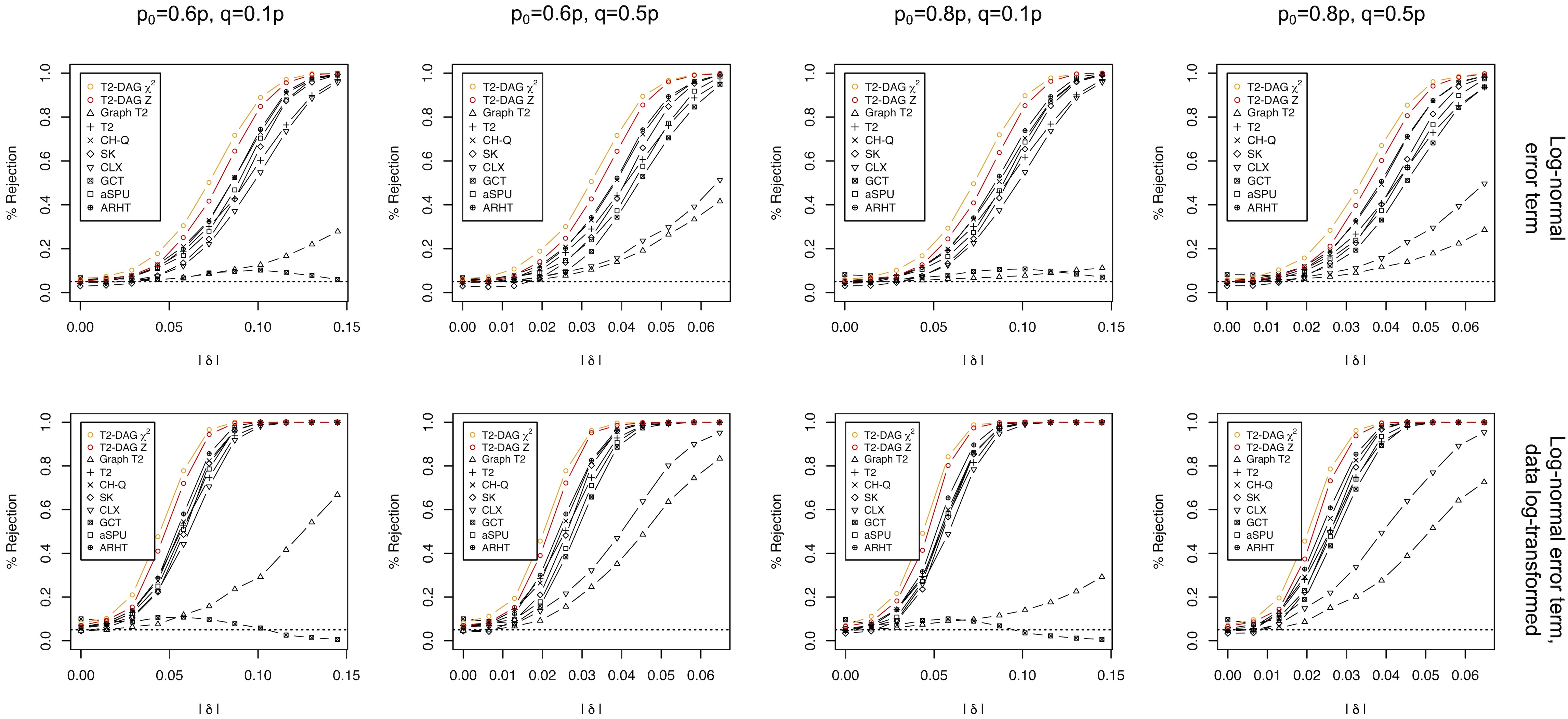}
\caption*{Figure S9. Empirical type I error rates ($\delta = 0$) and powers ($|\delta| > 0)$ of the various tests under $\bf{n_1=n_2=100}$ and $\bf{p=100}$, with error terms generated from \textbf{log-normal} distributions based on the raw data (row 1) or log-transformed data (row 2).
The number of children nodes in the DAG is set to $p_0=0.6p$ (columns 1 and 2) or $p_0=0.8p$ (columns 3 and 4), and the number of non-zero signals is set to $q = 0.1p$ (columns 1 and 3) or $0.5p$ (columns 2 and 4).}
\end{figure}
\FloatBarrier
\noindent


\begin{figure}[ht!]
\centering
    \includegraphics[width=\textwidth]{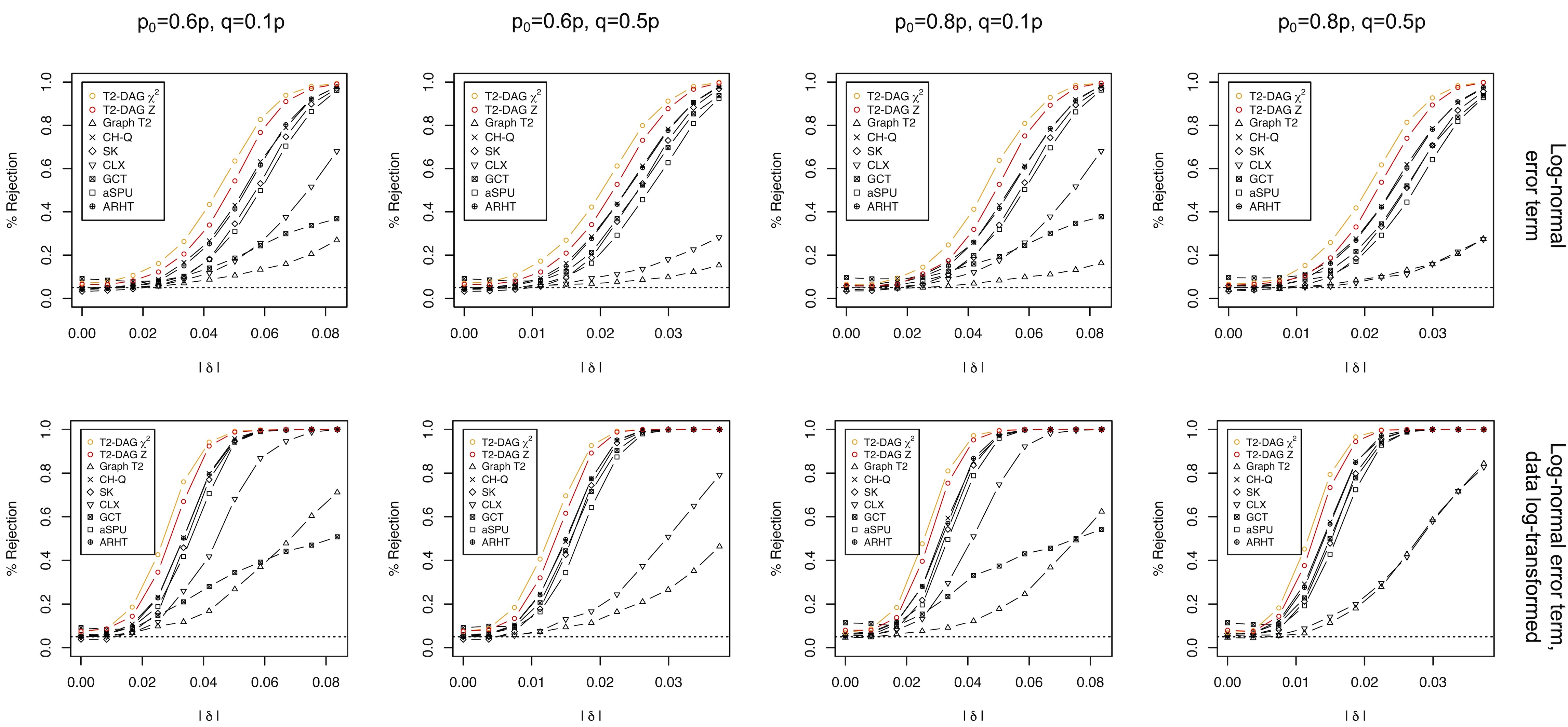}
\caption*{Figure S10. Empirical type I error rates ($\delta = 0$) and powers ($|\delta| > 0)$ of the various tests under $\bf{n_1=n_2=100}$ and $\bf{p=300}$, with error terms generated from \textbf{log-normal} distributions based on the raw data (row 1) or log-transformed data (row 2).
The number of children nodes in the DAG is set to $p_0=0.6p$ (columns 1 and 2) or $p_0=0.8p$ (columns 3 and 4), and the number of non-zero signals is set to $q = 0.1p$ (columns 1 and 3) or $0.5p$ (columns 2 and 4).}
\end{figure}
\FloatBarrier
\noindent

\begin{figure}[ht!]
\centering
    \includegraphics[width=\textwidth]{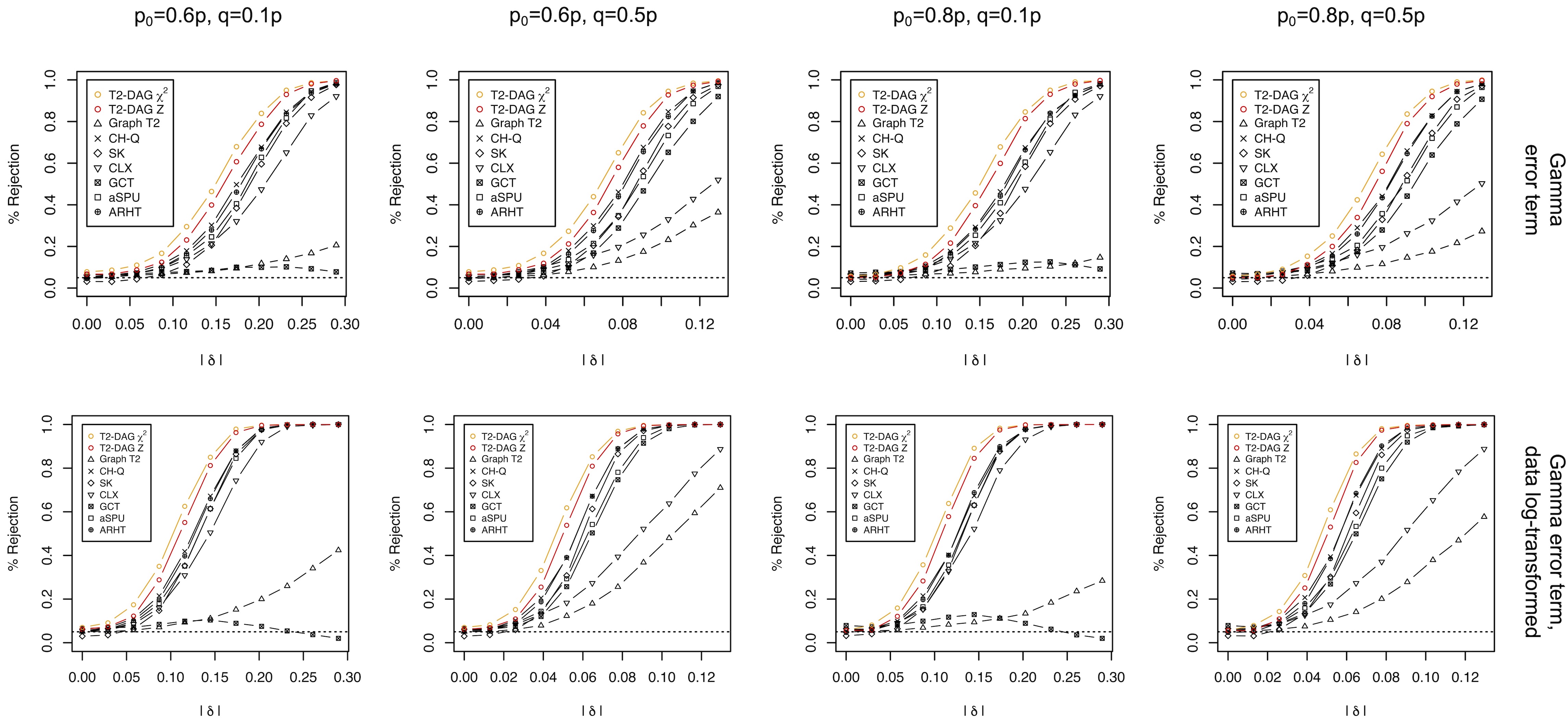}
\caption*{Figure S11. Empirical type I error rates ($\delta = 0$) and powers ($|\delta| > 0)$ of the various tests under $\bf{n_1=n_2=50}$ and $\bf{p=40}$, with error terms generated from \textbf{gamma} distributions based on the raw data (row 1) or log-transformed data (row 2).
The number of children nodes in the DAG is set to $p_0=0.6p$ (columns 1 and 2) or $p_0=0.8p$ (columns 3 and 4), and the number of non-zero signals is set to $q = 0.1p$ (columns 1 and 3) or $0.5p$ (columns 2 and 4).}
\end{figure}
\FloatBarrier
\noindent


\begin{figure}[ht!]
\centering
    \includegraphics[width=\textwidth]{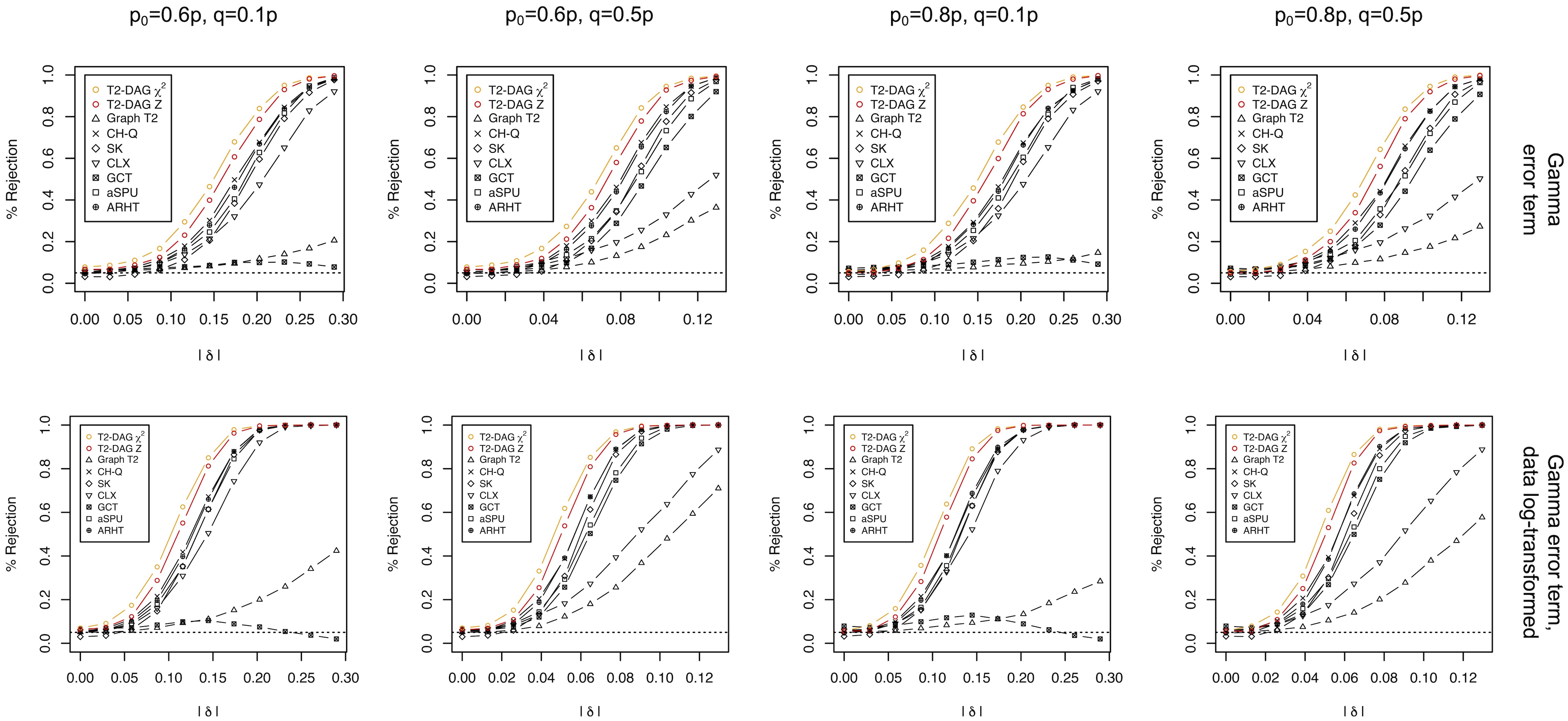}
\caption*{Figure S12. Empirical type I error rates ($\delta = 0$) and powers ($|\delta| > 0)$ of the various tests under $\bf{n_1=n_2=50}$ and $\bf{p=100}$, with error terms generated from \textbf{gamma} distributions based on the raw data (row 1) or log-transformed data (row 2).
The number of children nodes in the DAG is set to $p_0=0.6p$ (columns 1 and 2) or $p_0=0.8p$ (columns 3 and 4), and the number of non-zero signals is set to $q = 0.1p$ (columns 1 and 3) or $0.5p$ (columns 2 and 4).}
\end{figure}
\FloatBarrier
\noindent

\begin{figure}[ht!]
\centering
    \includegraphics[width=\textwidth]{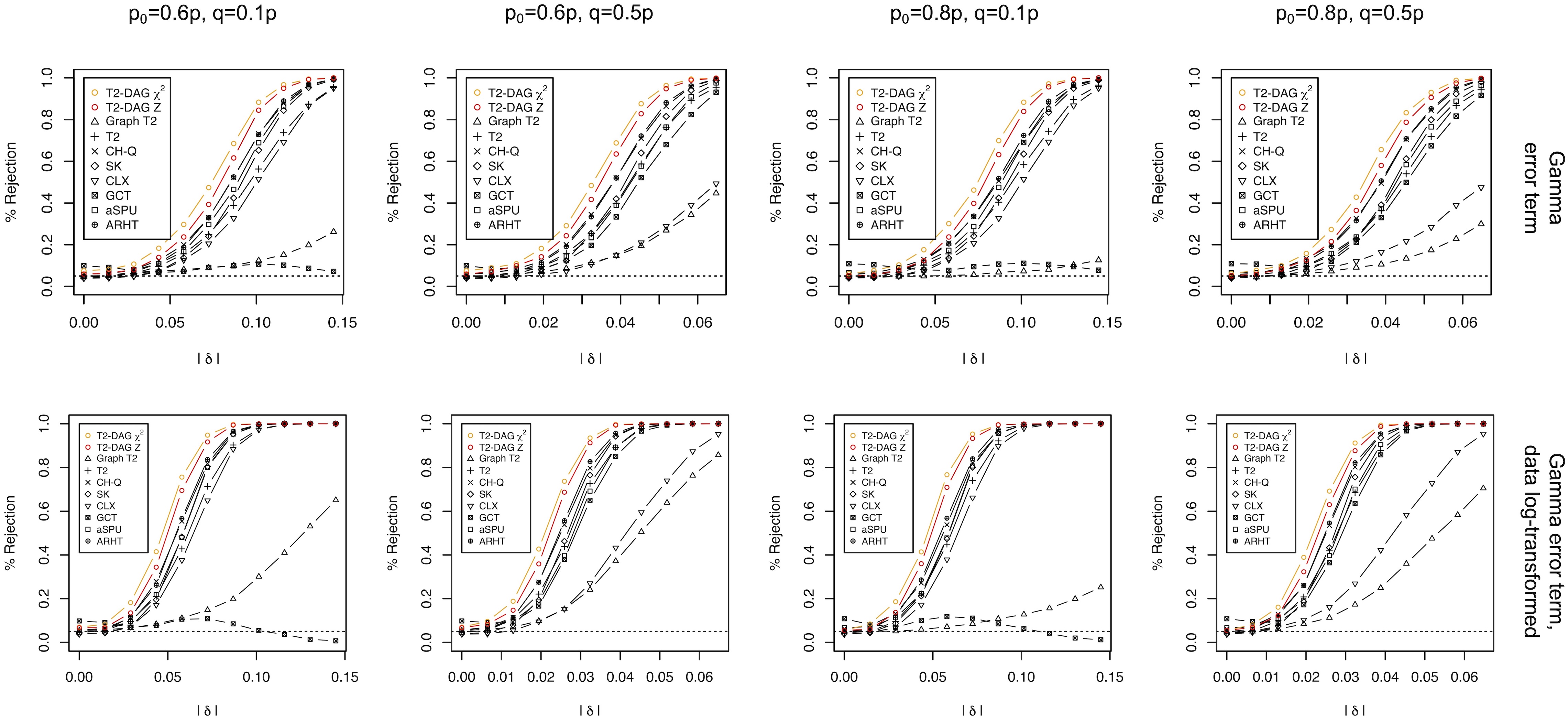}
\caption*{Figure S13. Empirical type I error rates ($\delta = 0$) and powers ($|\delta| > 0)$ of the various tests under $\bf{n_1=n_2=100}$ and $\bf{p=100}$, with error terms generated from \textbf{gamma} distributions based on the raw data (row 1) or log-transformed data (row 2).
The number of children nodes in the DAG is set to $p_0=0.6p$ (columns 1 and 2) or $p_0=0.8p$ (columns 3 and 4), and the number of non-zero signals is set to $q = 0.1p$ (columns 1 and 3) or $0.5p$ (columns 2 and 4).}
\end{figure}
\FloatBarrier
\noindent

\begin{figure}[ht!]
\centering
    \includegraphics[width=\textwidth]{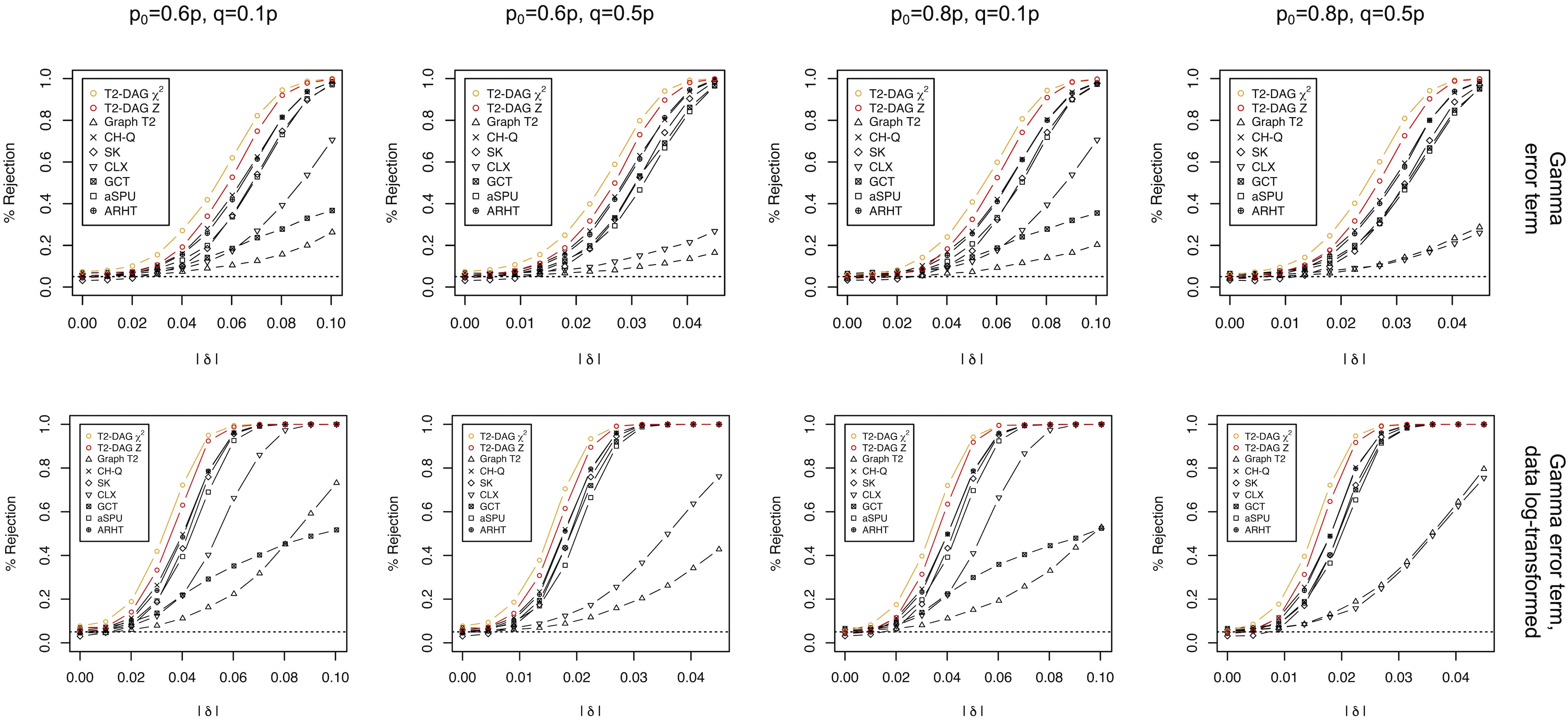}
\caption*{Figure S14. Empirical type I error rates ($\delta = 0$) and powers ($|\delta| > 0)$ of the various tests under $\bf{n_1=n_2=100}$ and $\bf{p=300}$, with error terms generated from \textbf{gamma} distributions based on the raw data (row 1) or log-transformed data (row 2).
The number of children nodes in the DAG is set to $p_0=0.6p$ (columns 1 and 2) or $p_0=0.8p$ (columns 3 and 4), and the number of non-zero signals is set to $q = 0.1p$ (columns 1 and 3) or $0.5p$ (columns 2 and 4).}
\end{figure}
\FloatBarrier
\noindent


\begin{figure}[ht!]
\centering
    \includegraphics[width=\textwidth]{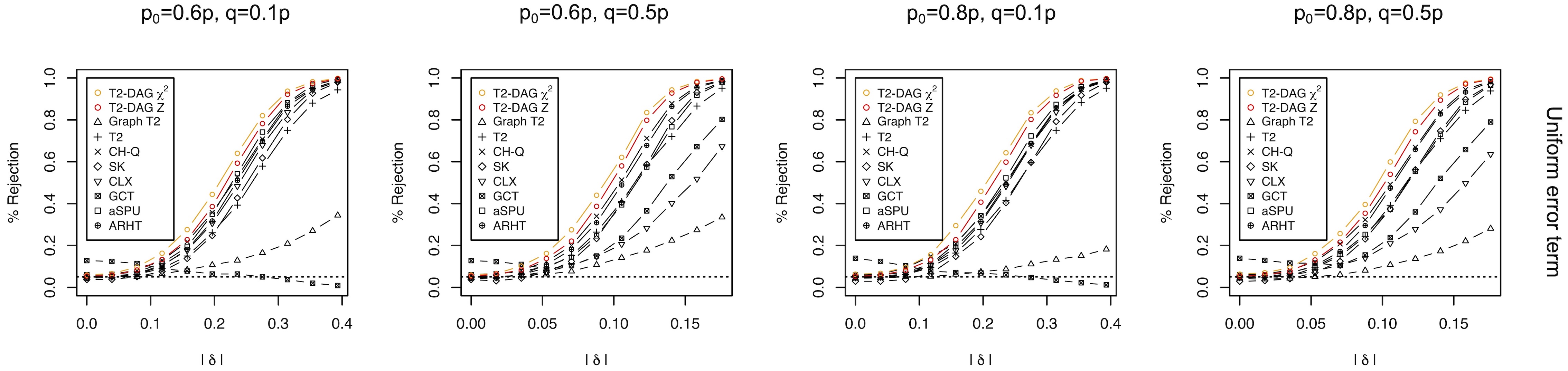}
\caption*{Figure S15. Empirical type I error rates ($\delta = 0$) and powers ($|\delta| > 0)$ of the various tests under $\bf{n_1=n_2=50}$ and $\bf{p=40}$, with error terms generated from \textbf{uniform} distributions.
The number of children nodes in the DAG is set to $p_0=0.6p$ (columns 1 and 2) or $p_0=0.8p$ (columns 3 and 4), and the number of non-zero signals is set to $q = 0.1p$ (columns 1 and 3) or $0.5p$ (columns 2 and 4).}
\end{figure}
\FloatBarrier
\noindent

\begin{figure}[ht!]
\centering
    \includegraphics[width=\textwidth]{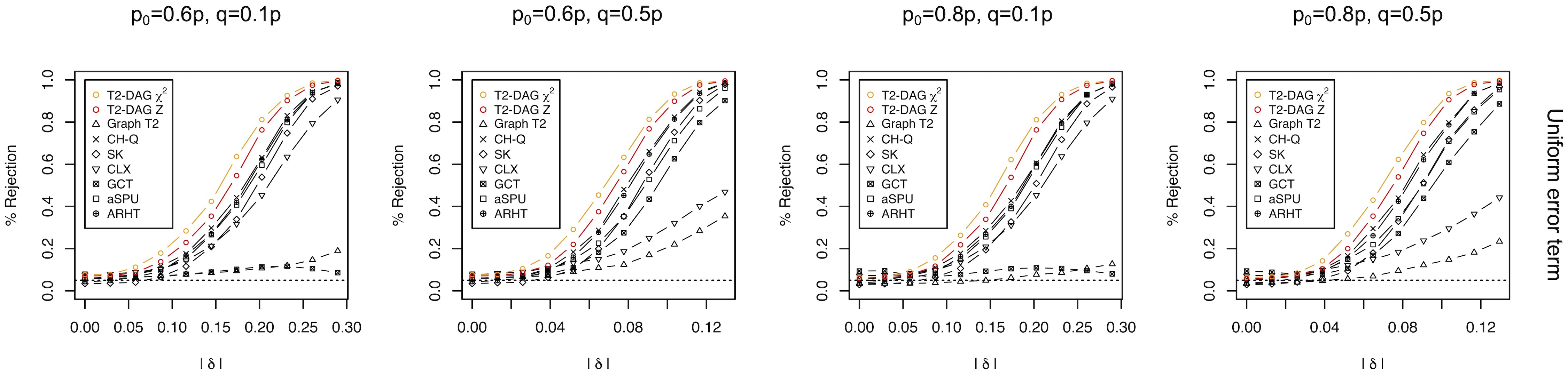}
\caption*{Figure S16. Empirical type I error rates ($\delta = 0$) and powers ($|\delta| > 0)$ of the various tests under $\bf{n_1=n_2=50}$ and $\bf{p=100}$, with error terms generated from \textbf{uniform} distributions.
The number of children nodes in the DAG is set to $p_0=0.6p$ (columns 1 and 2) or $p_0=0.8p$ (columns 3 and 4), and the number of non-zero signals is set to $q = 0.1p$ (columns 1 and 3) or $0.5p$ (columns 2 and 4).}
\label{fig.confounder}
\end{figure}
\FloatBarrier
\noindent

\begin{figure}[ht!]
\centering
    \includegraphics[width=\textwidth]{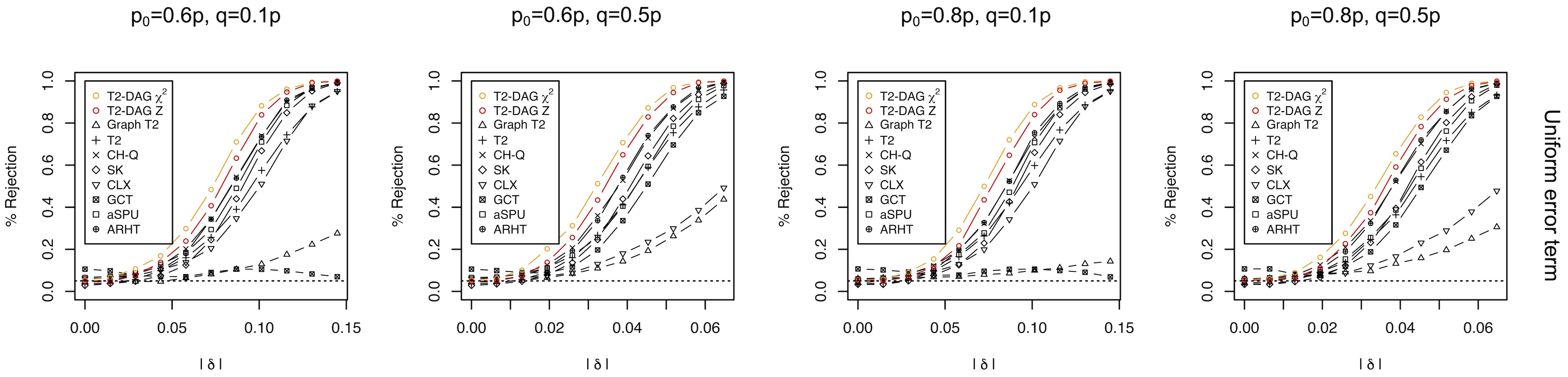}
\caption*{Figure S17. Empirical type I error rates ($\delta = 0$) and powers ($|\delta| > 0)$ of the various tests under $\bf{n_1=n_2=100}$ and $\bf{p=100}$, with error terms generated from \textbf{uniform} distributions.
The number of children nodes in the DAG is set to $p_0=0.6p$ (columns 1 and 2) or $p_0=0.8p$ (columns 3 and 4), and the number of non-zero signals is set to $q = 0.1p$ (columns 1 and 3) or $0.5p$ (columns 2 and 4).}
\end{figure}
\FloatBarrier
\noindent

\begin{figure}[ht!]
\centering
    \includegraphics[width=\textwidth]{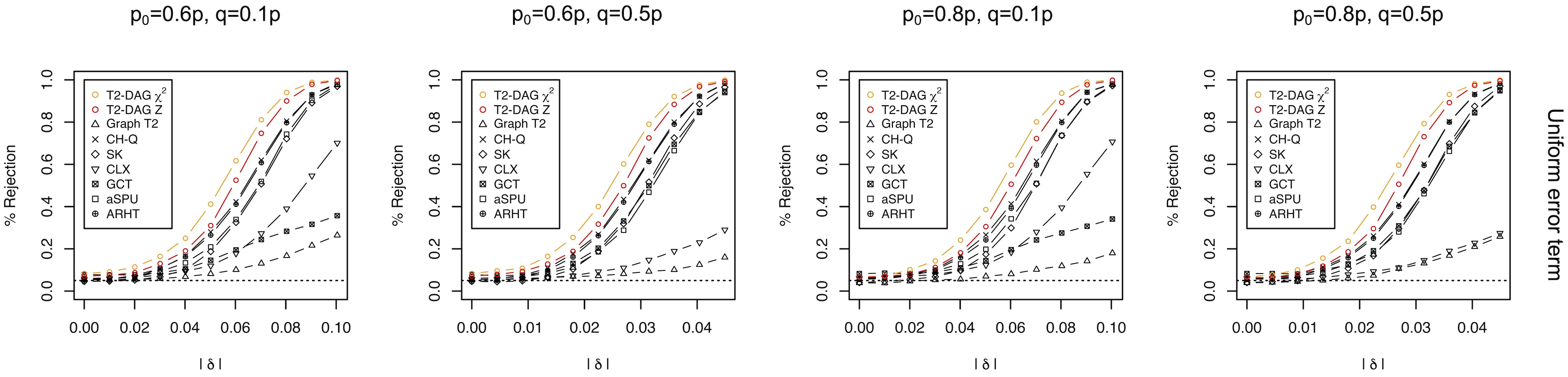}
\caption*{Figure S18. Empirical type I error rates ($\delta = 0$) and powers ($|\delta| > 0)$ of the various tests under $\bf{n_1=n_2=100}$ and $\bf{p=300}$, with error terms generated from \textbf{uniform} distributions.
The number of children nodes in the DAG is set to $p_0=0.6p$ (columns 1 and 2) or $p_0=0.8p$ (columns 3 and 4), and the number of non-zero signals is set to $q = 0.1p$ (columns 1 and 3) or $0.5p$ (columns 2 and 4).}
\end{figure}
\FloatBarrier
\noindent

\section{Additional information on real data analysis}

\begin{figure}[ht!]
\centering
	\includegraphics[width=\textwidth]{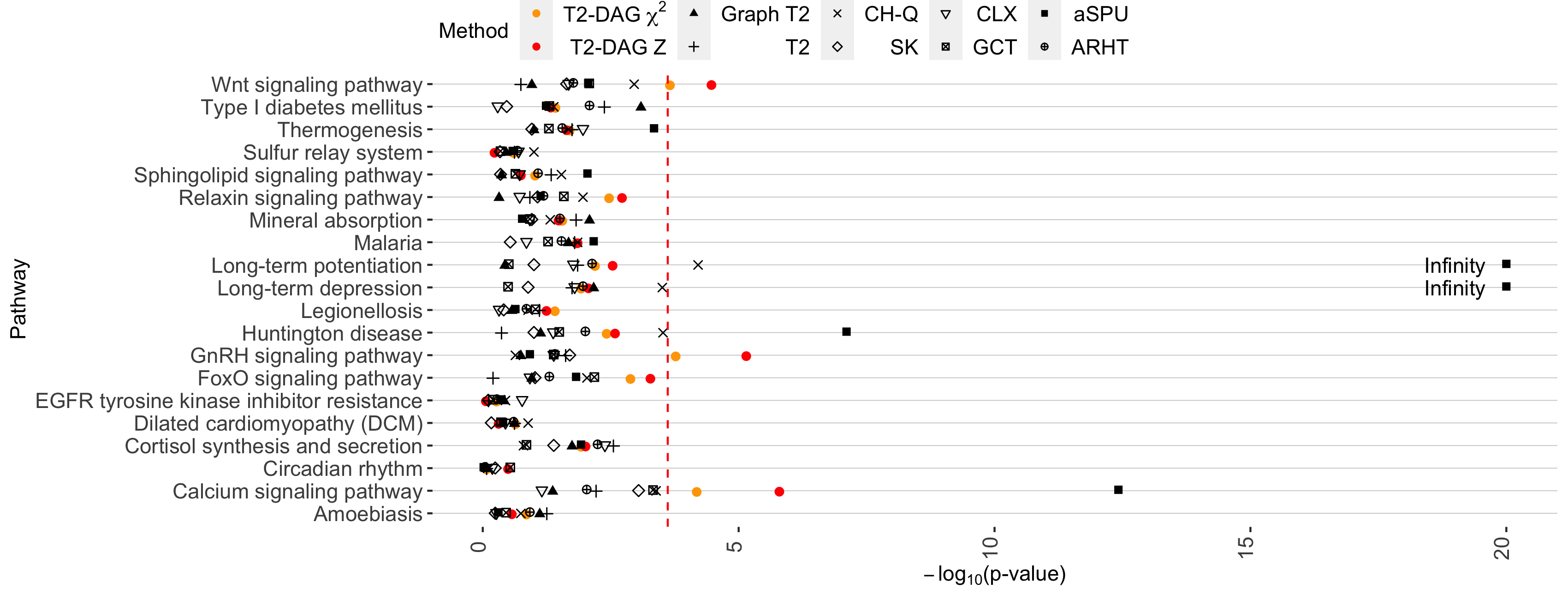}
	\caption*{Figure S16: $-{\log}_{10}(p)$ of the various tests for 20 randomly selected  pathways between stage I and stage II lung cancer.
\label{fig.I.II}}
\end{figure}
\FloatBarrier
\noindent
\begin{figure}[ht!]
\centering
	\includegraphics[width=\textwidth]{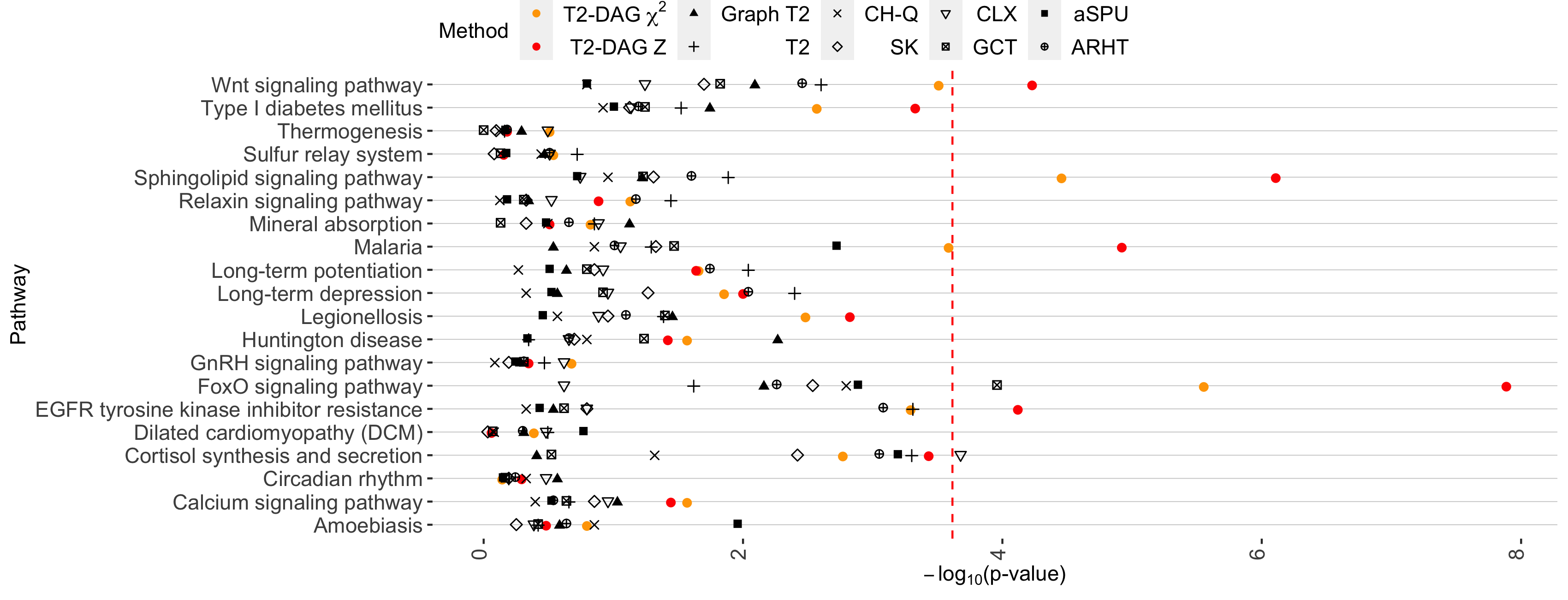}
	\caption*{Figure S17: The $-{\log}_{10}(p)$ of the various tests for 20 randomly selected  pathways between stage I and stage IV lung cancer.
\label{fig.I.IV}}
\end{figure}
\FloatBarrier
\noindent

\begin{figure}[ht!]
\centering
	\includegraphics[width=\textwidth]{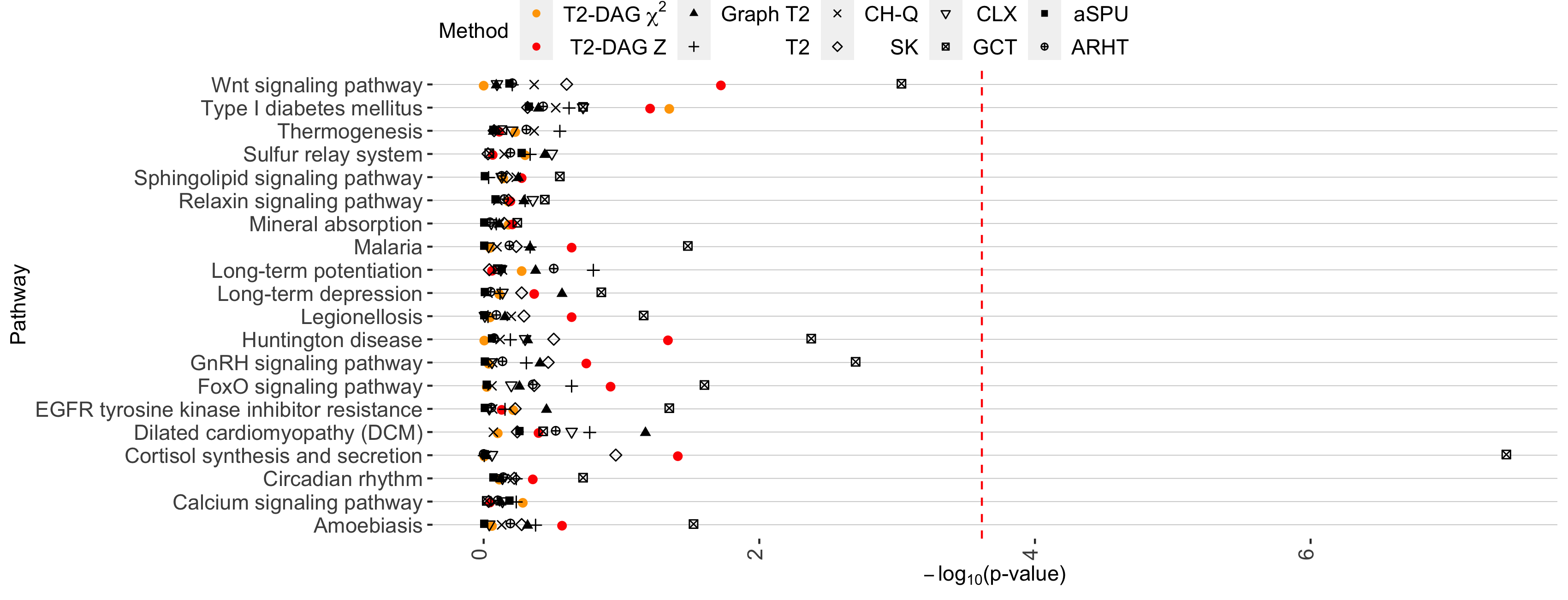}
	\caption*{Figure S18: The $-{\log}_{10}(p)$ of the various tests for 20 randomly selected  pathways between stage II and stage III lung cancer.
\label{fig.II.III}}
\end{figure}
\FloatBarrier
\noindent

\begin{figure}[ht!]
\centering
	\includegraphics[width=\textwidth]{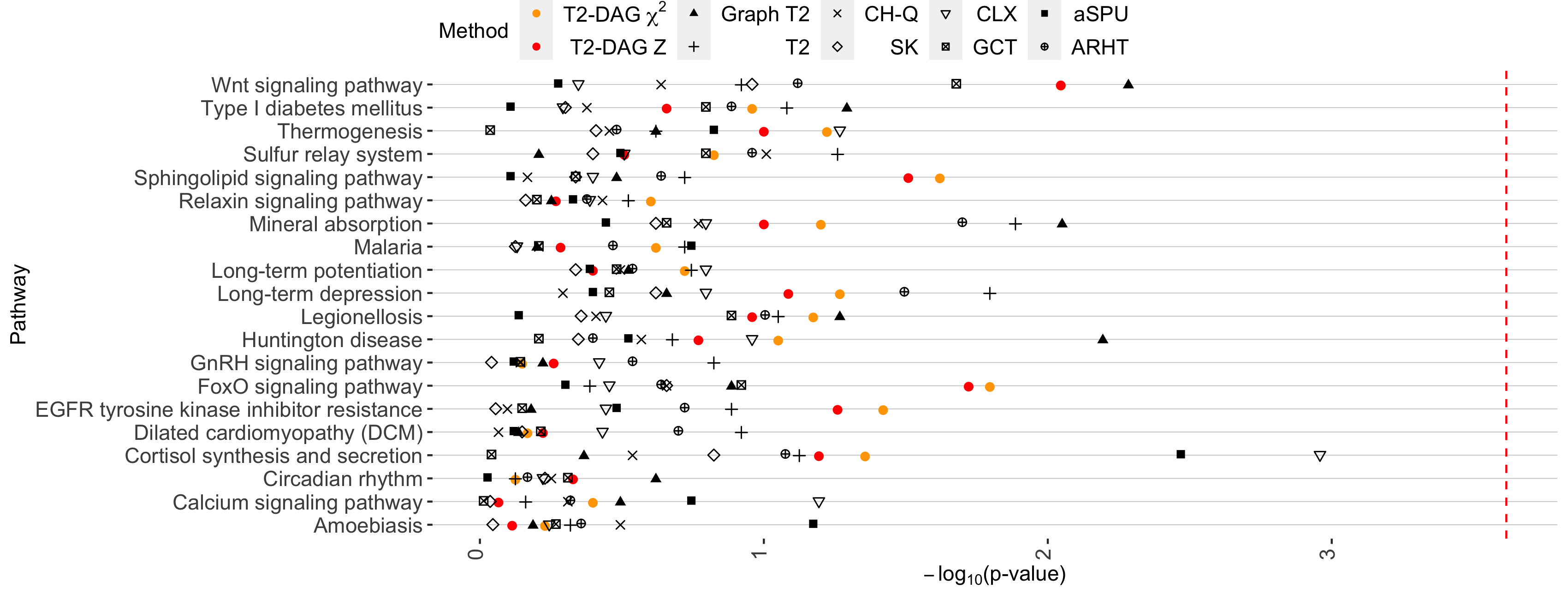}
	\caption*{Figure S19: The $-{\log}_{10}(p)$ of the various tests for 20 randomly selected  pathways between stage II and stage IV lung cancer.
\label{fig.II.IV}}
\end{figure}
\FloatBarrier
\noindent
\begin{figure}[ht!]
\centering
	\includegraphics[width=\textwidth]{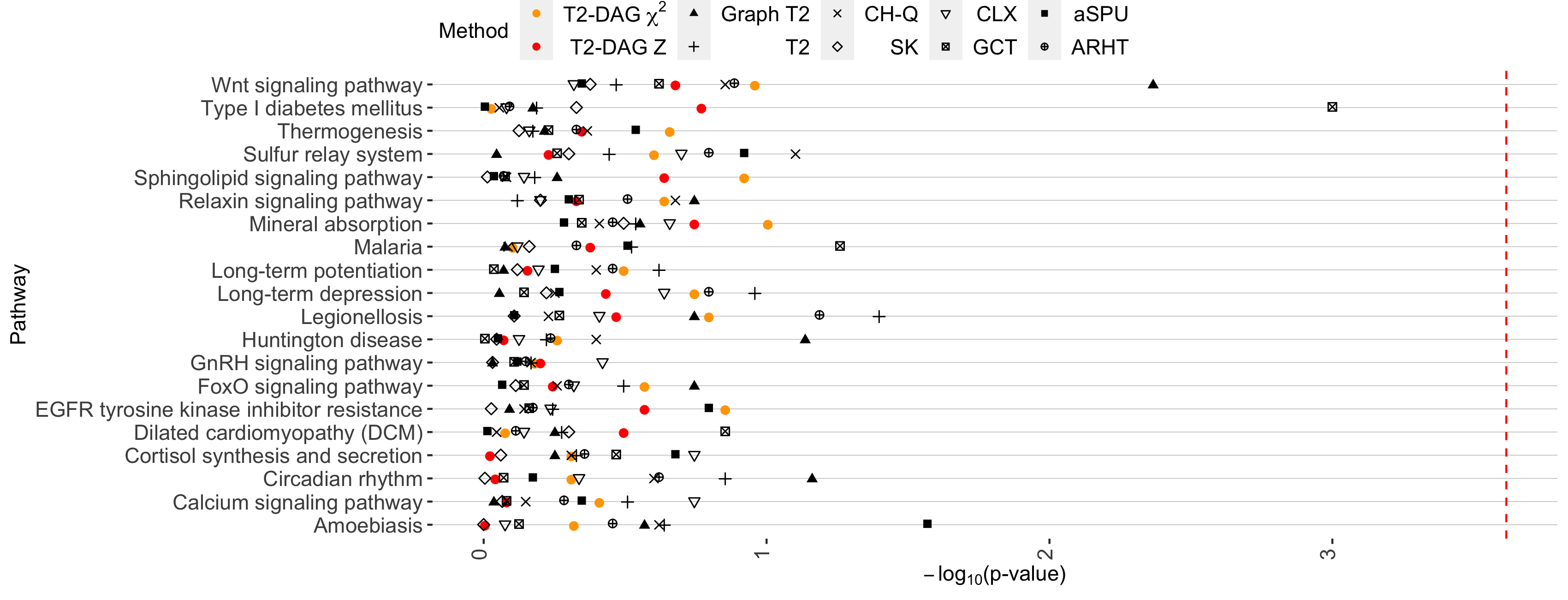}
	\caption*{Figure S20: The $-{\log}_{10}(p)$ of the various tests for 20 randomly selected  pathways between stage III and stage IV lung cancer.
\label{fig.III.IV}}
\end{figure}
\FloatBarrier
\noindent

\begin{figure}[ht!]
\centering
	\includegraphics[width=\textwidth]{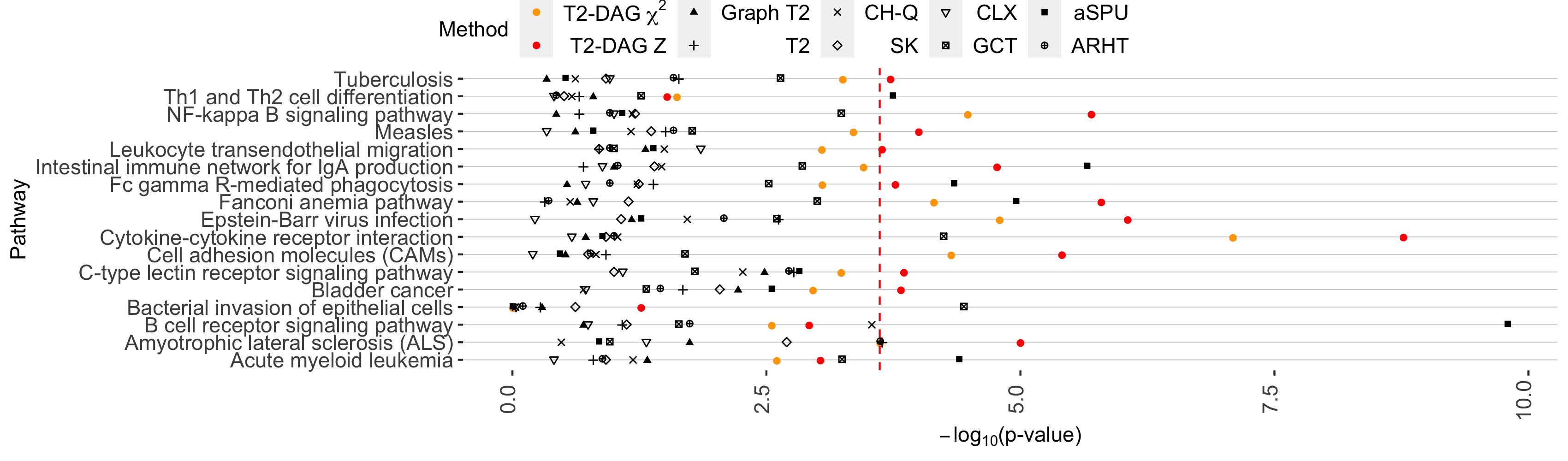}
	\caption*{Figure S21: The $-{\log}_{10}(p)$ of the comparison between stage II and stage IV lung cancer ($N_2=124$, $N_4=24$) for the pathways identified by any test to have significantly different expression levels between the two cancer stages.
\label{fig.II.III.significant}}
\end{figure}
\FloatBarrier
\noindent

\begin{figure}[ht!]
\centering
	\includegraphics[width=\textwidth]{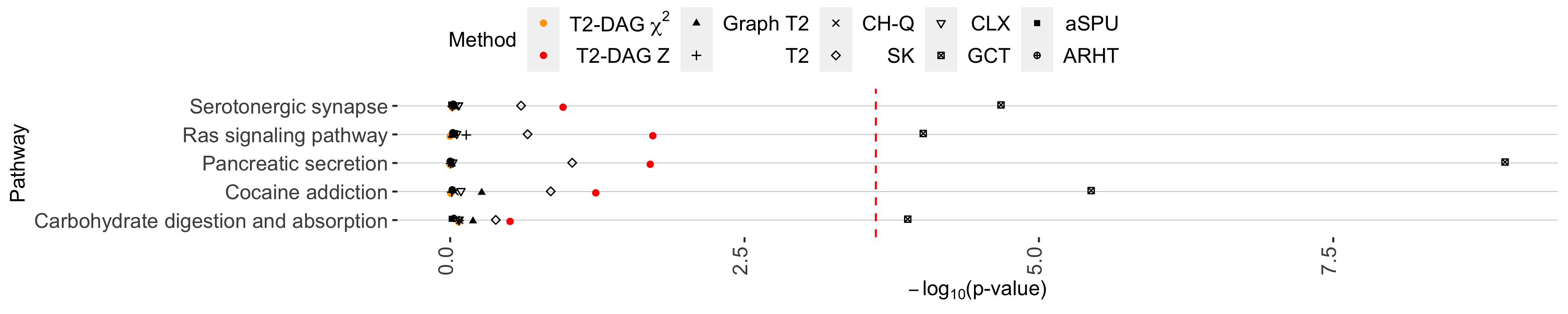}
	\caption*{Figure S22: The $-{\log}_{10}(p)$ of the comparison between stage III and stage IV lung cancer ($N_3=84$, $N_4=27$) for the pathways identified by any test to have significantly different expression levels between the two cancer stages.
\label{fig.III.IV.significant}}
\end{figure}
\FloatBarrier
\noindent

Detailed information on the pathways and testing results of data analysis  are summarized in Supplementary Tables S1 - S10 which are provided in a separate file named ``Supplemental Tables S1 - S10''.

\begin{table}[ht!]
\footnotesize{}
\def~{\hphantom{0}}
\centering
\caption*{Table S11: Results for testing differential expression of 206 pathways between stage I and stage II lung cancer ($N_1=278$, $N_2=124$). 
In each panel, the numbers in the diagonal cells are the numbers of significant pathways identified by the corresponding methods; the numbers in the $(i,j)$-th cell in the upper and lower triangle respectively indicate the number of significant pathways detected by both method $i$ and method $j$, and the correlation in $-{\log}_{10}(p)$ between the two methods. 
\label{table.I.II}}
\begin{tabular}{lcccccccccc}
& T2-DAG $\chi^2$ & T2-DAG $z$ & Graph T2 & T2 & CH-Q & SK & CLX & GCT & aSPU & ARHT\\ 
T2-DAG $\chi^2$ & 39 & 39 & 0 &   0 &  11 &   4 &   0 &   10 &   13 & 3 \\ 
T2-DAG $z$  & 0.98 & 57 & 0 &   0 &  13 &   4 &   0 &   10 &   18 & 3 \\ 
Graph T2 & 0.43 & 0.39 & 1 & 1 &   0 &   0 &   0 &   0 &   0 & 0\\
T2 & 0.39 & 0.34 & 0.70 &   1 &   0 &   0 &   0 &   0 &   0 & 0\\ 
CH-Q & 0.71 & 0.71 & 0.30 & 0.34 &  16 &   4 &   0 &   6 &   13 & 3\\ 
SK & 0.72 & 0.71 & 0.34 & 0.43 & 0.83 &   4 &   0 &   3 &   4 & 2\\ 
CLX & 0.40 & 0.34 & 0.23 & 0.35 & 0.40 & 0.48 &   0 &   0 &   0 & 0\\ 
GCT & 0.86 & 0.80 & 0.37 & 0.38 & 0.60 & 0.65 & 0.26 &   10 &   6 & 3\\
aSPU & 0.28 & 0.30 & 0.21 & 0.22 & 0.60 & 0.42 & 0.35 & 0.13 &   26 & 3\\
ARHT & 0.63 & 0.57 & 0.56 & 0.84 & 0.71 & 0.69 & 0.46 & 0.59 & 0.44 & 3\\
\end{tabular}
\end{table}

\begin{table}[ht!]
\footnotesize{}
\def~{\hphantom{0}}
\centering
\caption*{Table S12: Results for testing potential differential expression of 206 pathways between stage I and stage IV lung cancer ($N_1=278$, $N_4=27$).
In each panel, the numbers in the diagonal cells are the numbers of significant pathways identified by the corresponding methods; the numbers in the $(i,j)$-th cell in the upper and lower triangle respectively indicate the number of significant pathways detected by both method $i$ and method $j$, and the correlation in $-{\log}_{10}(p)$ between the two methods.
\label{table.I.IV}}
\begin{tabular}{lccccccccccc}
& T2-DAG $\chi^2$ & T2-DAG $z$ & Graph T2 & T2 & CH-Q & SK & CLX & GCT & aSPU & ARHT\\ 
  T2-DAG $\chi^2$ & 55 &  55 &   4 &  9 &   4 &   4 &   1 &  16 &   16 & 8\\ 
  T2-DAG $z$ &0.98 &   74 &   4 &  9 &   4 &   4 &   2 &  16 &   16 & 8\\ 
  Graph T2 & 0.54 & 0.55 &   4 &   2 &   3 &   2 &   0 &   2 &   3 & 3\\ 
  T2 & 0.43 &  0.41 & 0.49 &  12 &   2 &   2 &   0 &   0 &   3 & 7\\ 
  CH-Q & 0.44 & 0.54 & 0.43 & 0.32 &   5 &   2 &   0 &   1 &   5 & 3\\ 
  SK & 0.72 & 0.74 & 0.60 & 0.59 & 0.59 &  4 &   1 &   0 &   3 & 2\\ 
  CLX & 0.36 & 0.36 & 0.26 & 0.31 & 0.22 & 0.57 &  3 &   0 &   0 & 0\\ 
  GCT & 0.85 & 0.82 & 0.51 & 0.34 & 0.31 & 0.62 & 0.14 &  16 &   6 & 2\\ 
  aSPU & 0.39 & 0.45 & 0.47 & 0.27 & 0.69 & 0.52 & 0.33 & 0.33 &   19 & 5\\ 
  ARHT & 0.39 & 0.48 & 0.40 & 0.47 & 0.95 & 0.55 & 0.22 & 0.23 & 0.54 & 9\\
\end{tabular}
\end{table}
\FloatBarrier

\begin{table}[ht!]
\footnotesize{}
\def~{\hphantom{0}}
\centering
\caption*{Table S13: Results for testing potential differential expression of 206 pathways between stage II and stage IV lung cancer ($N_2=124$, $N_4=27$). In each panel, the numbers in the diagonal cells are the numbers of significant pathways identified by the corresponding methods; the numbers in the $(i,j)$-th cell in the upper and lower triangle respectively indicate the number of significant pathways detected by both method $i$ and method $j$, and the correlation in $-{\log}_{10}(p)$ between the two methods.
\label{table.II.IV}}
\begin{tabular}{llccccccccc}
& T2-DAG $\chi^2$ & T2-DAG $z$ & Graph T2 & T2 & CH-Q & SK & CLX & GCT & aSPU & ARHT\\ 
T2-DAG $\chi^2$ & 6 & 6 & 0 &   1 &   0 &   0 &   0 &   1 &   1 & 1\\ 
T2-DAG $z$ & 0.97 & 13 & 0 &   1 &   0 &   0 &   0 &   1 &   3 & 1\\ 
Graph T2  & 0.38 & 0.34 &   0 &   0 &   0 &   0 &   0 &   0 &   0 & 0\\ 
T2   & 0.45 & 0.40 & 0.49 &   1 &   0 &   0 &   0 &   0 &   0 & 1\\ 
CH-Q  & 0.60 & 0.58 & 0.36 & 0.30 &   0 &   0 &   0 &   0 &   0 & 0\\ 
SK   & 0.74 & 0.77 & 0.45 & 0.45 & 0.55 &   0 &   0 &   0 &   0 & 0\\ 
CLX   & 0.29 & 0.25 & 0.13 & 0.27 & 0.19 & 0.44 &   0 &   0 &   0 & 0\\ 
GCT   & 0.58 & 0.67 & 0.20 & 0.03 & 0.40 & 0.50 & -0.11 &   2 &   0 & 0\\ 
aSPU   & 0.43 & 0.45 & 0.16 & 0.18 & 0.64 & 0.47 & 0.38 & 0.29 &   6 & 0\\ 
ARHT   & 0.50 & 0.44 & 0.40 & 0.91 & 0.42 & 0.51 & 0.27 & 0.11 & 0.22 & 1\\ 
\end{tabular}
\end{table}

\begin{table}[ht]
\footnotesize{}
\def~{\hphantom{0}}
\centering
\caption*{Table S14: Results for testing potential differential expression of 206 pathways between stage III and stage IV lung cancer ($N_3=84$, $N_4=24$). In each panel, the numbers in the diagonal cells are the numbers of significant pathways identified by the corresponding methods; the numbers in the $(i,j)$-th cell in the upper and lower triangle respectively indicate the number of significant pathways detected by both method $i$ and method $j$, and the correlation in $-{\log}_{10}(p)$ between the two methods.
\label{table.III.IV}}
\begin{tabular}{lcccccccccc}
& T2-DAG $\chi^2$ & T2-DAG $z$ & Graph T2 & T2 & CH-Q & SK & CLX & GCT & aSPU & ARHT\\ 
  T2-DAG $\chi^2$ & 0 &  0 & 0 & 0 & 0 & 0 & 0 & 0 & 0 & 0\\ 
  T2-DAG $z$ & 0.55 & 0 & 0 & 0 & 0 & 0 & 0 & 0 & 0 & 0\\ 
  Graph T2 & 0.33 & 0.01 & 0 & 0 & 0 &  0 &  0 & 0 &  0 & 0\\ 
  T2 & 0.35 & -0.01 & 0.34 & 0 & 0 &  0 &   0 &   0 &  0 & 0\\ 
  CH-Q & 0.50 & 0.13 & 0.35 & 0.22 & 0 & 0 &  0 & 0 &  0 & 0\\ 
  SK & 0.37 & 0.89 & -0.01 & -0.07 & 0.12 & 0 & 0 &   0 &   0 & 0\\ 
  CLX & 0.45 & 0.10 & 0.08 & 0.30 & 0.22 & 0.07 & 0 &   0 &   0 & 0\\ 
  GCT & -0.21 & 0.53 & -0.20 & -0.26 & -0.21 & 0.62 & -0.27 & 5 &  0 & 0\\ 
  aSPU & 0.38 & 0.11 & 0.12 & 0.15 & 0.47 & 0.05 & 0.41 & -0.16 &  0 & 0\\ 
  ARHT  & 0.40 & 0.01 & 0.41 & 0.86 & 0.44 & -0.04 & 0.25 & -0.25 & 0.18 & 0\\ 
\end{tabular}
\end{table}
\FloatBarrier

\bibliographystyle{plainnat}
\bibliography{c-ref}

\end{document}